\documentclass[conference]{IEEEtran}
\IEEEoverridecommandlockouts

\usepackage{amsmath}
\usepackage{amsfonts}
\usepackage{amsthm}
\usepackage{amssymb}

\usepackage{cite}
\usepackage{booktabs}
\usepackage{graphicx}
\usepackage{textcomp}
\usepackage{xcolor}
\usepackage{url}

\usepackage{fontawesome5}

\usepackage{algorithm}
\usepackage[noend]{algpseudocode}
\usepackage{listings}
\usepackage{array}

\usepackage{wasysym}
\usepackage{tikz}
\usepackage{makecell}
\usepackage{pifont}
\usepackage{threeparttable}
\usepackage{enumerate}
\usepackage{enumitem}

\usepackage{soul}
\sethlcolor{yellow}

\newtheorem{proposition}{Proposition}
\theoremstyle{definition}

\newcommand{\para}[1]{\noindent {\bf #1} \hspace{2pt}}

\newcommand*\circledsmall[1]{%
  \tikz[baseline=(char.base)]{
    \node[
      circle,
      draw,
      thick,
      inner sep=0.2pt,
      font=\fontsize{7.5}{8}\selectfont
    ] (char) {\strut #1};
  }
}

\definecolor{codered}{rgb}{0.6,0,0}
\definecolor{codegray}{rgb}{0.4,0.4,0.4}

\lstdefinestyle{sqlstyle}{
    language=SQL,
    basicstyle=\scriptsize\ttfamily,
    keywordstyle=\color{codered}\bfseries,
    commentstyle=\color{codegray},
    showstringspaces=false,
    framerule=1pt,
    rulecolor=\color{black},
    breaklines=true,
    breakatwhitespace=true,
    numbers=none,
    keepspaces=true,
    escapeinside={(*@}{@*)},
    moredelim=**[s][\hl]{--@hlstart}{--@hlend}
}

\begin{document}

\title{A Set-Theoretic Approach to Detecting Logic Bugs in DBMS Inner Join Optimizations}

\author{
\IEEEauthorblockN{Ce Lyu$^{1}$, Changzheng Wei$^{1,2}$, Yanhao Wang$^{1}$, Jie Liang$^{3}$, Li Lin$^{2}$,\\
Hanghang Wu$^{2}$, Minghao Zhao$^{1*}$\thanks{$^*$Corresponding authors.}, Ying Yan$^{2*}$, Aoying Zhou$^{1}$}
\IEEEauthorblockA{
$^{1}${\em East China Normal University} \quad $^{2}${\em Digital Technologies, Ant Group} \quad $^{3}${\em Tsinghua University}\\
51275903097@stu.ecnu.edu.cn, changzheng.wcz@antgroup.com, yhwang@dase.ecnu.edu.cn, liangjie.mailbox.cn@gmail.com\\
\{felix.ll, hanghang.whh\}@antgroup.com, mhzhao@dase.ecnu.edu.cn, fuying.yy@antgroup.com, ayzhou@dase.ecnu.edu.cn}
}

\maketitle

\begin{abstract}
The query optimizer is a fundamental component of database management systems that determines the most efficient execution strategy for a given query by evaluating alternative query plans. Among its tasks, join optimization plays a central role, as the order of joins in multi-table queries can significantly affect execution performance. However, due to the inherent complexity of join optimization, logical bugs are inevitable and often difficult to detect. While existing fuzzing tools have shown notable success in uncovering crash- and performance-related errors, effectively identifying logical bugs---cases in which the system produces incorrect query results---remains largely unresolved.

In this paper, we propose a metamorphic testing approach to detect DBMS bugs related to \texttt{INNER JOIN} optimization through the lens of set theory. For each testing case, equivalent queries are generated based on a basic set operation---\emph{intersection}---and three semantics-preserving transformation rules, i.e., symmetric join transformation, asymmetric difference transformation, and symmetric difference transformation, are introduced. These rules rewrite a simple \texttt{NATURAL/INNER JOIN} query into a more complex, yet semantically equivalent, form. We implement this design in JoinEquiv, which serves as a testing oracle to systematically uncover logical inconsistencies in DBMS query processing by comparing the results of original and transformed queries. Using JoinEquiv, we uncovered 29 previously unknown issues in mainstream DBMSs (MySQL, TiDB, DuckDB, and Percona), and 27 of them were officially confirmed. JoinEquiv reveals deep logical flaws in DBMS optimizers and executors, underscoring its value in enhancing DBMS robustness.
\end{abstract}

\begin{IEEEkeywords}
Database testing, query optimization, join equivalence, logical bugs
\end{IEEEkeywords}

\section{Introduction}
\label{sec:intro}

Database management systems (DBMSs) constitute the backbone of modern computing infrastructure, enabling efficient storage, management, and retrieval of large-scale data \cite{chamberlin1976relational, stonebraker2013intel, stonebraker2024goes, hai2025quantum}.
Within a DBMS, the query optimizer is a critical component that determines the most efficient execution plan for a given query \cite{ding2018plan, ryan2019neo, zhang2023simple, neumann2018adaptive, xin24spatial}.
Join optimization, in particular, is central to query processing, as joins often dominate the cost of complex analytical workloads \cite{zhao2025debunking,birler2024robust}.
Optimizing joins is notoriously challenging due to the factorial growth of the join search space and the uncertainty of cost estimation.
Consequently, query optimizers employ sophisticated algorithms to determine optimal join orders \cite{hu2025output,kalumin2025optimizing}.
While effective, these techniques render DBMS query optimizers highly complex, making bugs inevitable.

Among these bugs, some are relatively easy to detect, while others are not.
Specifically, {\em crash bugs} cause the system to terminate unexpectedly or become unavailable, often arising from unhandled exceptions, resource conflicts, or internal errors in storage or indexing modules; {\em performance bugs} do not compromise correctness but significantly degrade query efficiency, resulting in long execution times, high resource consumption, or reduced throughput.
Both types of bugs are relatively easy to detect, as their effects are directly observable.
In contrast, {\em logic bugs} produce incorrect query results (i.e., violating the intended semantics of the workload) but do not cause the system to fail.
They are particularly difficult to detect, as their manifestations are subtle and may arise only under specific data distributions or complex query patterns~\cite{wikipedia_test_oracle_2025, howden2006theoretical}.
While state-of-the-art fuzzers (e.g., SQLsmith~\cite{seltenreich2025sqlsmith} and AFL~\cite{afl2025}) efficiently uncover crash and performance defects, they generally lack the ability to detect logic bugs.

To address the challenge of detecting logic bugs, several approaches have been proposed.
Differential testing, for example, identifies potential bugs by comparing the results of the same query across different DBMSs or versions of the same DBMS.
While effective at detecting inconsistencies between implementations, its applicability is often limited to the common core of SQL functionalities~\cite{slutz1998massive}, as different DBMSs vary in their support for the SQL standard and frequently provide proprietary extensions.
Another approach is pivoted query synthesis (PQS)~\cite{pqs}, which validates query results by constructing auxiliary queries centered on a specific pivot row.
Although PQS can be effective in certain scenarios, it requires substantial engineering effort and is less capable of capturing set-level inconsistencies, such as duplicates or missing tuples.

Recently, metamorphic tests have been increasingly used for DBMS testing~\cite{tlp, thanos, norec, song2025detecting}.
Representative frameworks include ternary logic partitioning (TLP)~\cite{tlp} and non-optimizing reference engine construction (NoREC)~\cite{norec}.
These approaches transform a query into semantically equivalent variants, either through predicate partitioning or by bypassing the optimizer's execution path, to check for inconsistencies.
Transformed query synthesis (TQS) \cite{tang2023detecting} utilizes optimization hints to verify query results against a ground truth.
Recently, differential query planning (DQP) \cite{dqp} adopts a similar hint-based strategy but focuses on detecting inconsistencies between different physical plans for the same query.
We aim to present a metamorphic testing scheme that efficiently and effectively detects logic bugs in DBMS join optimization.
Our key insight is that {\em an \texttt{INNER/NATURAL JOIN}) can be semantically expressed as a combination of multiple set intersection operations.}
Such a theoretical equivalence should be strictly upheld by any standards-compliant DBMS.
However, we discovered that this is often not the case.
A highly illustrative motivating example is shown in Listing~\ref{lst:motivating-example}, which depicts a bug we discovered in TiDB \cite{tidb}.
We crafted a simple \texttt{INNER JOIN} query whose \texttt{ON} predicate, \texttt{(-1.35703815E8 >= t0.c0)}, is provably \texttt{FALSE}.
A fundamental principle of SQL dictates that such a query can produce no joined rows, and thus its final result set must be empty.
In blatant contradiction to this principle, TiDB returns a non-empty result $\{0\}$.
As shown in Listing~\ref{lst:cross-plan}, the root cause became evident upon inspecting the query's execution plan.
Astonishingly, the \texttt{HashJoin} operator was labeled as a \texttt{CARTESIAN inner join}, which signifies a cross product.
This provides direct evidence that the query optimizer had completely discarded the essential \texttt{ON} predicate.
By erroneously demoting the \texttt{INNER JOIN} to a \texttt{CROSS JOIN}, the optimizer exhibited not a subtle semantic misinterpretation but a catastrophic failure in its core logic.

\begin{lstlisting}[
  style=sqlstyle,
  caption={An exemple of a logic bug where an \texttt{INNER JOIN} is incorrectly transformed into a \texttt{CARTESIAN INNER JOIN} in TiDB.},
  label={lst:motivating-example}
]

CREATE TABLE t0(c0 BOOL UNSIGNED NOT NULL);
CREATE TABLE t1 LIKE t0;
INSERT INTO t0(c0) VALUES (true);
INSERT INTO t1(c0) VALUES (false);
-- (*@\colorbox{yellow!65}{\color{codegray}Original Query}@*)
SELECT DISTINCT t1.c0 FROM t1 INNER JOIN t0 ON -1.35703815E8>=t0.c0 WHERE t0.c0; 
-- Expected Result: empty set | Actual Result: {0}. (*@\bugicon@*)
-- (*@\colorbox{yellow!65}{\color{codegray}Equivalent Query}@*)
(SELECT DISTINCT t1.c0 FROM t1 LEFT JOIN t0 ON -1.35703815E8>=t0.c0 WHERE t0.c0)
 INTERSECT
(SELECT DISTINCT t1.c0 FROM t1 RIGHT JOIN t0 ON -1.35703815E8>=t0.c0 WHERE t0.c0); 
-- Expected Result: empty set | Actual Result: empty set.(*@\checkicon@*)
\end{lstlisting}

\begin{lstlisting}[
  language=SQL,
  caption={The anomalous execution plan in TiDB, which reveals a destructive rewrite where the \texttt{ON} condition is discarded.},
  label={lst:cross-plan}
]

-> HashAgg ... group by: t1.c0 ... (rows=1)
    -> (*@\colorbox{red!30}{{\color{codered}\bfseries HashJoin} ... {\color{codered}\bfseries CARTESIAN inner join} ... ({\color{codered}\bfseries rows}=1)}@*)
\end{lstlisting}

The above example reveals that such optimizer-level logic bugs often stem from incorrect handling of algebraic equivalences during query rewriting. 
Although existing studies have achieved significant results in logic bug detection, our systematic investigation reveals that these methods still exhibit notable deficiencies in supporting equivalent transformations at the relational algebra level.
Table~\ref{tab:feature_comparison} compares the feature coverage of typical testing approaches with respect to SQL set operations, including \texttt{EXCEPT}, \texttt{INTERSECT}, \texttt{UNION}, and their \texttt{ALL} variants.
As shown in Table~\ref{tab:feature_comparison}, PQS, NoREC, DQP, and TQS do not support \texttt{EXCEPT}, \texttt{INTERSECT}, or \texttt{UNION} operations, while TLP supports only \texttt{UNION} (including \texttt{ALL}).
These results reveal that existing approaches focus primarily on predicate partitioning or optimizer path validation, resulting in limited testing capability.

\begin{table}[t]
    \centering
    \caption{Comparison of SQL feature coverage of different testing approaches. Here, ``$\bigcirc$'' and ``$\times$'' indicate whether an approach supports the feature.}
    \label{tab:feature_comparison}
    \vspace{-1em}
    \begin{tabular}{cccc}
        \toprule
        \makecell{\textbf{Approach}} & \makecell{\textbf{\shortstack{EXCEPT\\(ALL)}}} & \makecell{\textbf{\shortstack{INTERSECT\\(ALL)}}} & \makecell{\textbf{\shortstack{UNION\\(ALL)}}} \\
        \midrule
        PQS      & $\times$      & $\times$     & $\times$      \\
        NoREC    & $\times$      & $\times$     & $\times$      \\
        TLP      & $\times$      & $\times$     & $\bigcirc$    \\
        DQP      & $\times$      & $\times$     & $\times$      \\
        TQS      & $\times$      & $\times$     & $\times$      \\
        JoinEquiv (*) & $\bigcirc$    & $\bigcirc$   & $\bigcirc$   \\
        \bottomrule
    \end{tabular}
\end{table}

To address this issue, we perform an in-depth exploration of set-theoretic equivalences in join queries for test case generation.
Leveraging the property that an \texttt{INNER JOIN} (or \texttt{NATURAL JOIN}) is semantically equivalent to a combination of set intersection operations, we design three transformation rules (\S\ref{sec:sjt-rule}, \S\ref{sec:adt-rule}, and \S\ref{sec:sdt-rule}).
These rules enable the construction of semantically equivalent query pairs, whose result sets can be compared to uncover deep logical inconsistencies in query rewriting or execution semantics.
Based on this equivalence-driven query generation method, we develop a metamorphic testing framework, JoinEquiv, implemented on top of \emph{SQLancer}~\cite{sqlancer} and made publicly available.\footnote{\url{https://github.com/DBFuzzing/JoinEquiv}}
Compared with existing approaches (see Table~\ref{tab:feature_comparison}), JoinEquiv achieves full support for all three set operations and their \texttt{ALL} variants.

To evaluate the effectiveness of our JoinEquiv framework, we conduct extensive tests on four production-grade DBMSs: MySQL, TiDB, DuckDB, and Percona.
JoinEquiv reported 29 previously unknown logical inconsistencies.
Among these, 27 have already been confirmed by developers, 14 of which involve critical bugs in query optimizer rewrite logic or executor semantic handling.
Cross-oracular validation shows that the vast majority of the remaining bugs can only be detected by our intersection-equivalence transformations; i.e., they cannot be detected by existing metamorphic testing approaches such as TLP and DQP, thus highlighting the unique and complementary value of JoinEquiv.
We believe that the principled methodology, low implementation effort, and broad applicability of JoinEquiv will lead to its widespread adoption to improve the robustness of DBMSs.

In summary, our main contributions in this paper include:
\begin{itemize}
    \item We propose a new metamorphic testing paradigm that bridges relational algebra and set theory, mapping \texttt{INNER JOIN} to set-theoretic intersection to build principled logic-level oracles.
    \item We design a structurally complete set of minimal transformation rules (SJT, ADT, and SDT) that collectively encompass all core SQL set operators. They serve as normal forms for intersection semantics, supported by formal proofs that establish their equivalence and identify their specific validity boundaries across sets and multisets.
    \item We implement and evaluate JoinEquiv on widely used DBMSs to uncover many previously unknown logic bugs.
\end{itemize}

\section{Background}

\subsection{Set and Multiset Semantics in SQL}

Although SQL is rooted in the relational model \cite{codd1970relational}, there is a key semantic divergence between SQL implementations and Codd's original pure set-based theory: modern DBMSs commonly adopt the bag (or multiset) model, which allows for identical tuples in a data table.
To cope with such dual semantics, SQL's set operators, such as \texttt{UNION}, \texttt{INTERSECT}, and \texttt{EXCEPT}, have evolved two variants: the default \texttt{DISTINCT} mode for working with sets (de-duplication), and the \texttt{ALL} mode for working with bags (preserving duplicates).
This semantic complexity extends to join operations. Both \texttt{INNER JOIN} and \texttt{NATURAL JOIN}, in essence, can be viewed as generalized intersection operations since their core goal is to find combinations of tuples that satisfy specific conditions in two relationships.

\subsection{Query Optimization and Rewrite Logic}

The query optimizer is a critical component that translates declarative SQL into efficient physical execution plans \cite{selinger1979access, graefe1993volcano}.
This typically involves navigating a combinatorial search space of join orders, physical operators, and access paths \cite{Chaudhuri1998Overview}.
To manage such complexity, optimizers rely heavily on heuristic rewrite rules (e.g., predicate pushdown and join reordering) and on cardinality estimation \cite{leis2018query}.
However, ensuring the correctness of these equivalent transformations is notoriously difficult.
The interplay between complex rewrite logic and diverse SQL features makes the optimizer a frequent source of logic bugs, in which a flawed transformation causes the DBMS to violate query semantics and produce incorrect results.

\begin{figure*}[t]
  \centering
  \includegraphics[width=.9\linewidth]{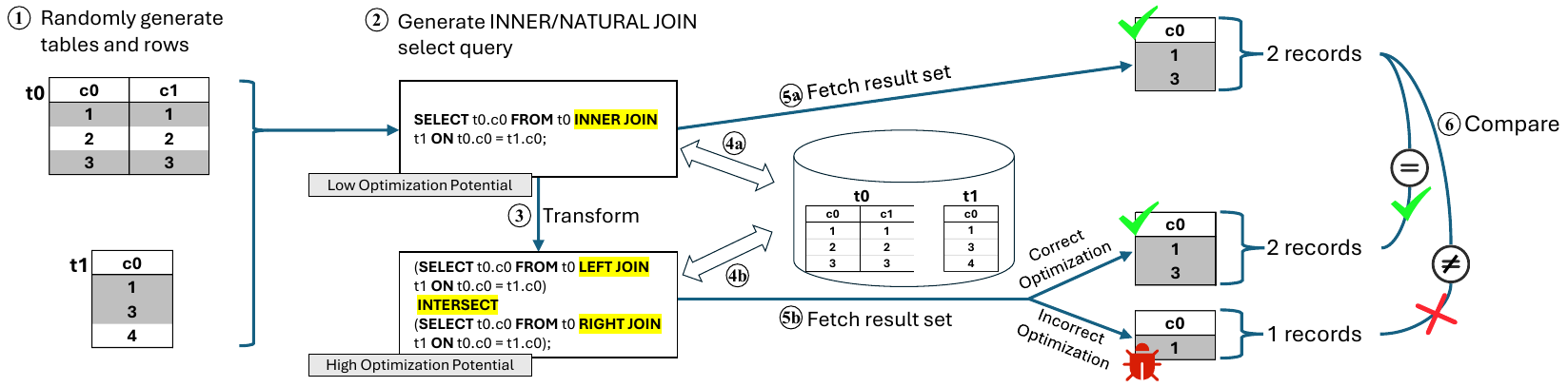}
  \vspace{-1em}
  \caption{Illustration of the procedure of JoinEquiv. Specifically, it translates an \texttt{INNER/NATURAL JOIN} query into a semantically equivalent but structurally more complex query, enabling the detection of logic bugs by comparing the result sets of the two queries. Here, the top result set is the output of the original query, while the bottom two respectively illustrate the correct and incorrect outputs of the transformed query.\label{fig:overview}}
\end{figure*}

\subsection{Metamorphic Testing}

Metamorphic testing is an effective methodology in software testing \cite{chen1998metamorphic, chen1998metamorphic, ChenKLPTTZ18, SeguraFSC16} designed to overcome the \textit{oracle problem}, where the ground truth is difficult to determine.
Its core idea is that even if the correct output (i.e., the ``ground truth'') of a single execution is unknown, we can use metamorphic relations between multiple inputs to verify whether their corresponding outputs satisfy the expected invariance.
If this expected relation is violated, it indicates a flaw in the system \cite{chen1998metamorphic}.
In the context of DBMS testing, this idea is materialized in the construction of semantically equivalent query pairs.
The approach rewrites an original query $Q$ into a logically equivalent query $Q'$, which may exhibit a different syntactic structure and execution path.
In this way, the assertion $\text{Result}(Q) \equiv \text{Result}(Q')$ itself constitutes an endogenous (self-contained) test adjudicator that requires no external reference.
Unlike differential testing, meta-transformation testing circumvents the challenge of SQL dialect differences by shifting the focus of verification from inter-system consistency to intra-system logical consistency.

\section{Methodology}

In this section, we present our logical bug detection method, JoinEquiv, based on set-theoretic principles.
Our key insight is that the intersection operation of a set can be expressed in a number of different algebraic equations.
We apply this insight to DBMSs and find that the behavior of \texttt{INNER JOIN} and \texttt{NATURAL JOIN} operations is highly isomorphic to set intersections in the specific but common scenario where the join columns contain no \texttt{NULL} values.

This finding provides a solid theoretical foundation on which we can build our test determiners.
Specifically, we take a simple \emph{original query} $Q_{\text{orig}}$ and use set constants to construct a \emph{transformed query} $Q_{\text{meta}}$.
These two queries are theoretically equivalent in an ideal \texttt{NOT NULL}-constrained environment.
Therefore, any inconsistency between their results clearly indicates a logical bug in the DBMS, rather than semantic ambiguity arising from \texttt{NULL} handling.

\subsection{Framework Overview}

As illustrated in Fig.~\ref{fig:overview}, JoinEquiv consists of six main steps.

Step \circledsmall{1} generates a database state by creating two tables (\texttt{t0}, \texttt{t1}) and populating them with random data, e.g., inserting $\{(1,1), (2,2), (3,3)\}$ into \texttt{t0} and $\{(1), (3), (4)\}$ into \texttt{t1}.

Step \circledsmall{2} creates an \emph{original query} $Q_{\text{orig}}$, ``\texttt{SELECT t0.c0 FROM t0 INNER JOIN t1 ON t0.c0=t1.c0}.''
Since \texttt{INNER JOIN} is a fundamental and well-optimized operator, we expect the DBMS to choose an efficient execution plan using its native query optimizer.
Steps \circledsmall{4a} and \circledsmall{5a} execute $Q_{\text{orig}}$ on the DBMS and fetch its result set.
In our example, assuming the DBMS behaves correctly, the result of $Q_{\text{orig}}$ should be $\text{rs}_{\text{orig}} = \{1, 3\}$, representing the two matching tuples.

Step \circledsmall{3} translates $Q_{\text{orig}}$ into a \emph{transformed query} $Q_{\text{meta}}$.
For instance, by applying the \emph{symmetric join transformation} rule (see Section~\ref{sec:sjt-rule}), we can rewrite $Q_{\text{meta}}$ as \texttt{(SELECT t0.c0 FROM t0 LEFT JOIN t1 ON t0.c0=t1.c0) INTERSECT (SELECT t0.c0 FROM t0 RIGHT JOIN t1 ON t0.c0=t1.c0)}.
Although $Q_{\text{meta}}$ is semantically equivalent to $Q_{\text{orig}}$, its more complex structure (involving two outer joins and a set intersection) effectively disables the optimizer's standard join optimization strategies for a simple inner join.
Consequently, the DBMS is forced to adopt an entirely different and typically more complex execution path, such as first computing the intermediate results of both outer joins and then performing the \texttt{INTERSECT} operation on them.
Steps \circledsmall{4b} and \circledsmall{5b} similarly execute $Q_{\text{meta}}$ on the DBMS and fetch its result set.
In a correctly implemented DBMS, the final result should still be $\text{rs}_{\text{meta}} = \{1, 3\}$.
However, due to an incorrect optimization or an executor defect, some tuples may be erroneously omitted.
In this example, the buggy result $\text{rs}_{\text{bug}} = \{1\}$ misses one tuple.

Finally, Step \circledsmall{6} compares their result sets ($\text{rs}_{\text{orig}}$ vs.~$\text{rs}_{\text{meta}}$ or $\text{rs}_{\text{bug}}$).
For a correct execution, $\text{rs}_{\text{orig}} = \{1, 3\}$ and $\text{rs}_{\text{meta}} = \{1, 3\}$ are identical and the test passes.
In the bug case, $\text{rs}_{\text{bug}} = \{1\}$ indicates that JoinEquiv has successfully detected a logical bug in the query optimizer or executor.

Fundamentally, the effectiveness of our transformation strategy lies in forcing the DBMS to evaluate two fundamentally different execution paths.
While the original \texttt{INNER JOIN} is typically executed via a single highly efficient physical operator (e.g., hash join), our proposed rules construct complex, nested structures involving outer joins and set operations.
This structural complexity often inhibits the optimizer's standard simplification strategies, compelling the DBMS to fully materialize intermediate results and perform combinatorial aggregations.
Any semantic inconsistency that arises between these two logically equivalent but distinct execution paths will manifest as a discrepancy in the outputs.

Next, we describe the three transformation rules in detail.

\subsection{Symmetric Join Transformation}
\label{sec:sjt-rule}

The \textbf{symmetric join transformation (SJT)} rule is grounded in a fundamental principle of relational algebra: 
An \texttt{INNER JOIN} or \texttt{NATURAL JOIN} is logically equivalent to the intersection of its corresponding left and right outer joins under the \texttt{NOT NULL} constraint.
Formally, this rule is derived from the set-theoretic identity $A \cap B$, which represents the minimal canonical form required to reconstruct intersection semantics using asymmetric outer joins.

\begin{figure*}[t]
  \centering
  \includegraphics[width=.9\linewidth]{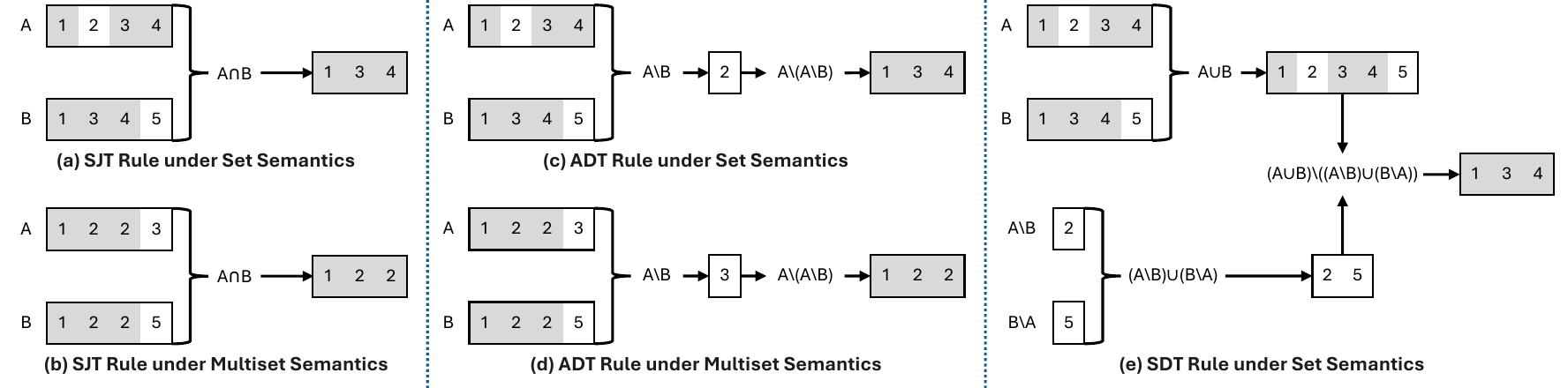}
  \vspace{-1em}
  \caption{Illustrative examples of the SJT, ADT, and SDT transformation rules under set and multiset semantics.\label{fig:data}}
\end{figure*}

To ensure the correctness of transformations, SJT must accurately handle the two core semantics of sets and multisets inherent in SQL.
Using a simple dataset, Figs.~\ref{fig:data}(a) and~\ref{fig:data}(b) demonstrate the difference between set and multiset semantics:
\begin{itemize}
  \item \textbf{Set Semantics:}
  Elements are treated as distinct entities, and duplicate values are automatically removed.
  For instance, with $A = \{1,2,3,4\}$ and $B = \{1,3,4,5\}$, the intersection operation yields $A \cap B = \{1,3,4\}$.
  
  \item \textbf{Multiset Semantics:}
  Element multiplicities are preserved during set operations.  
  Specifically, the \emph{multiplicity} of an element in the result is determined by the \emph{minimum} of its multiplicities in the two input multisets.  
  For instance, with $A = \{1,2,2,3\}$ and $B = \{1,2,2,5\}$, the multiset intersection becomes $A \cap B = \{1,2,2\}$, where the duplicate occurrences of $2$ are retained.
\end{itemize}

Algorithm~\ref{alg:sjt_ast} describes the SJT rule.
It takes as input the abstract syntax tree (AST) of a query and produces a new AST that is structurally different yet semantically equivalent, from which a logically equivalent query can be generated.
First, it verifies whether the input AST contains an \texttt{INNER JOIN} or \texttt{NATURAL JOIN}.
If no such operator exists, the transformation is deemed inapplicable (lines~2--3).
Next, using the original AST as a template, the algorithm constructs two new AST subtrees: one where the join operator is replaced with a \texttt{LEFT JOIN} (denoted as $\text{AST}_L$) and another replaced by a \texttt{RIGHT JOIN} (denoted as $\text{AST}_R$) (lines~5--6).
To preserve duplicate semantics, it inspects whether the query contains the \texttt{DISTINCT} modifier (line~7).
If \texttt{DISTINCT} exists (i.e., using \emph{set semantics}), duplicate tuples are removed and the transformed query employs the \texttt{INTERSECT} operator.
Otherwise (i.e., using \emph{multiset semantics}), duplicates are preserved and \texttt{INTERSECT ALL} is used to maintain tuple multiplicities.
Finally, the algorithm creates a new \texttt{INTERSECT} node as the parent, attaching $\text{AST}_L$ and $\text{AST}_R$ as its left and right children, respectively, thus constructing the transformed $\text{AST}_{\text{out}}$ as output (line~11).

\begin{algorithm}[t]
  \caption{Symmetric Join Transformation (SJT)}
  \label{alg:sjt_ast}
  \footnotesize
  \begin{algorithmic}[1]
    \Statex \textbf{Input:} An input abstract syntax tree $\text{AST}_{\text{in}}$
    \Statex \textbf{Output:} A rewritten abstract syntax tree $\text{AST}_{\text{out}}$ that is semantically equivalent to $\text{AST}_{\text{in}}$
    \State $op \gets \text{GetJoinOperator}(\text{AST}_{\text{in}})$;
    \If{$op \neq \texttt{INNER JOIN}$ \textbf{and} $op \neq \texttt{NATURAL JOIN}$}
      \State \Return $\text{AST}_{\text{in}}$; \Comment{Transformation is not applicable}
    \EndIf
    \State $\text{AST}_L \gets \text{RewriteJoinType}(\text{AST}_{\text{in}}, \texttt{LEFT JOIN})$;
    \State $\text{AST}_R \gets \text{RewriteJoinType}(\text{AST}_{\text{in}}, \texttt{RIGHT JOIN})$;
    \If{$\text{HasDistinctModifier}(\text{AST}_{\text{in}})$}
      \State $intOp \gets \text{CreateNode}(\texttt{INTERSECT})$;
    \Else
      \State $intOp \gets \text{CreateNode}(\texttt{INTERSECT ALL})$;
    \EndIf
    \State $\text{AST}_{\text{out}} \gets \text{Construct}(intOp, \text{AST}_L, \text{AST}_R)$;
    \State \Return $\text{AST}_{\text{out}}$;
  \end{algorithmic}
\end{algorithm}

\subsection{Asymmetric Difference Transformation}
\label{sec:adt-rule}

The \textbf{asymmetric difference transformation (ADT)} rule constructs equivalent queries by leveraging the asymmetric set-difference identity $A \cap B \equiv A \setminus (A \setminus B)$, implemented via \texttt{EXCEPT} (or \texttt{EXCEPT ALL}).
This represents the minimal canonical form to construct an intersection relying solely on the set-difference operator, grounded in the double complement law.
Compared to the SJT rule, ADT generates a nested query structure that poses unique challenges to the DBMS optimizer in handling and simplifying complex \texttt{EXCEPT} clauses.
To ensure transformation correctness, ADT must be validated under set and multiset semantics.
On a simple dataset, Figs.~\ref{fig:data}(c) and~\ref{fig:data}(d) illustrate how ADT behaves under the two semantics:
\begin{itemize}
  \item \textbf{Set Semantics:}  
  Let $A = \{1,2,3,4\}$ and $B = \{1,3,4,5\}$.
  The inner difference, $A \setminus B$, yields $\{2\}$, and applying the outer expression, $A \setminus (A \setminus B)$, restores the elements common to both sets, resulting in $\{1,3,4\}$, which exactly equals the standard intersection $A \cap B$.

  \item \textbf{Multiset Semantics:}  
  Let $A = \{1,2,2,3\}$ and $B = \{1,2,2,5\}$.  
  Under the multiset (bag) semantics, the first difference, $A \setminus B$, produces $\{3\}$, and the subsequent expression, $A \setminus (A \setminus B)$, yields $\{1,2,2\}$, which is consistent with the multiset intersection result.
\end{itemize}

\begin{algorithm}[t]
  \caption{Asymmetric Difference Transformation (ADT)}
  \label{alg:adt}
  \footnotesize
  \begin{algorithmic}[1]
    \Statex \textbf{Input:} An input abstract syntax tree $\text{AST}_{\text{in}}$
    \Statex \textbf{Output:} A rewritten abstract syntax tree $\text{AST}_{\text{out}}$ that is semantically equivalent to $\text{AST}_{\text{in}}$
    \State $op \gets \text{GetJoinOperator}(\text{AST}_{\text{in}})$;
    \If{$op \neq \texttt{INNER JOIN}$ \textbf{and} $op \neq \texttt{NATURAL JOIN}$}
      \State \Return $\text{AST}_{\text{in}}$; \Comment{Transformation is not applicable}
    \EndIf
    \State $\text{AST}_L \gets \text{RewriteJoinType}(\text{AST}_{\text{in}}, \texttt{LEFT JOIN})$;
    \State $\text{AST}_R \gets \text{RewriteJoinType}(\text{AST}_{\text{in}}, \texttt{RIGHT JOIN})$;
    \If{$\text{HasDistinctModifier}(\text{AST}_{\text{in}})$}
      \State $excOp \gets \text{CreateNode}(\texttt{EXCEPT})$;
    \Else
      \State $excOp \gets \text{CreateNode}(\texttt{EXCEPT ALL})$;
    \EndIf
    \State $\text{AST}_{\text{inner\_except}} \gets \text{Construct}(excOp, \text{AST}_L, \text{AST}_R)$;
    \State $\text{AST}_{\text{out}} \gets \text{Construct}(excOp, \text{AST}_L, \text{AST}_{\text{inner\_except}})$;
    \State \Return $\text{AST}_{\text{out}}$;
  \end{algorithmic}
\end{algorithm}

Algorithm~\ref{alg:adt} describes the ADT rule.
Its initialization (lines~1--6) mirrors that of the SJT rule. 
The subsequent steps diverge: It examines whether the original query contains a \texttt{DISTINCT} modifier, selecting either an \texttt{EXCEPT} (for set semantics) or \texttt{EXCEPT ALL} (for multiset semantics) operator node (\texttt{excOp}) to strictly preserve semantic equivalence (lines~7-10).
The core of the ADT transformation lies in a two-stage tree construction.
First, \texttt{excOp} combines $\text{AST}_L$ and $\text{AST}_R$ to construct a subtree representing the inner difference expression $A \setminus B$ 
($\text{AST}_\text{inner\_except}$, line~11).
Then, \texttt{EXCEPT} is applied again with $\text{AST}_L$ as the left child and $\text{AST}_\text{inner\_except}$ as the right child, resulting in the final AST $\text{AST}_\text{out}$ representing the complete expression $A \setminus (A \setminus B)$ (line~12).

\subsection{Symmetric Difference Transformation}
\label{sec:sdt-rule}

The \textbf{symmetric difference transformation (SDT)} rule constructs equivalent queries based on the symmetric set-difference identity $A \cap B \equiv (A \cup B) \setminus ((A \setminus B) \cup (B \setminus A))$.
This constitutes the minimal canonical form to derive the intersection via the \texttt{UNION} operator, grounded in the inclusion-exclusion principle.
Compared to SJT and ADT, SDT produces a more complex set operation structure, challenging the DBMS optimizer with multiple intermediate computations.
To verify transformation correctness, SDT is examined under both set and multiset semantics. 
Using a simple dataset, Fig.~\ref{fig:data}(e) illustrates the behavior of SDT under set semantics:
\begin{itemize}
  \item \textbf{Set Semantics:}  
  Let $A = \{1,2,3,4\}$ and $B = \{1,3,4,5\}$.  
  The symmetric difference $(A \setminus B) \cup (B \setminus A)$ equals $\{2,5\}$, and subtracting this from the full union $A \cup B = \{1,2,3,4,5\}$ yields $\{1,3,4\}$, which is exactly $A \cap B$.  
  This example demonstrates that SDT correctly preserves intersection semantics through symmetric-difference operations under set semantics.

  \item \textbf{Multiset Semantics:}  
  Under multiset semantics, however, the same procedure fails to maintain equivalence. We formally prove such a non-equivalence in Section~\ref{sec:proofs}.
\end{itemize}

\begin{algorithm}[t]
  \caption{Symmetric Difference Transformation (SDT)}
  \label{alg:sdt_corrected}
  \footnotesize
  \begin{algorithmic}[1]
    \Statex \textbf{Input:} An input abstract syntax tree $\text{AST}_{\text{in}}$
    \Statex \textbf{Output:} A rewritten abstract syntax tree $\text{AST}_{\text{out}}$ that is semantically equivalent to $\text{AST}_{\text{in}}$ under set semantics
    \State $op \gets \text{GetJoinOp}(\text{AST}_{\text{in}})$;
    \If {$op \neq \texttt{INNER JOIN}$ \textbf{and} $op \neq \texttt{NATURAL JOIN}$}
      \State \Return $\text{AST}_{\text{in}}$; \Comment{Transformation is not applicable}
    \EndIf
    \State $\text{AST}_L \gets \text{RewriteJoinType}(\text{AST}_{\text{in}}, \texttt{LEFT JOIN})$;
    \State $\text{AST}_R \gets \text{RewriteJoinType}(\text{AST}_{\text{in}}, \texttt{RIGHT JOIN})$;
    \State $N_{un} \gets \text{CreateNode}(\texttt{UNION})$;
    \State $N_{exc} \gets \text{CreateNode}(\texttt{EXCEPT})$;
    \State $N_{un} \gets \text{CreateNode}(\texttt{UNION})$;
    \State $\text{Diff}_L \gets \text{Construct}(N_{exc}, \text{AST}_L, \text{AST}_R)$;
    \State $\text{Diff}_R \gets \text{Construct}(N_{exc}, \text{AST}_R, \text{AST}_L)$;
    \State $\text{Sym}\_\text{Diff} \gets \text{Construct}(N_{un}, \text{Diff}_L, \text{Diff}_R)$;
    \State $\text{Full}\_\text{Union} \gets \text{Construct}(N_{un}, \text{AST}_L, \text{AST}_R)$;
    \State $\text{AST}_{\text{out}} \gets \text{Construct}(N_{exc}, \text{Full}\_\text{Union}, \text{Sym}\_\text{Diff})$;
    \State \Return $\text{AST}_{\text{out}}$;
  \end{algorithmic}
\end{algorithm}

Algorithm~\ref{alg:sdt_corrected} details the SDT transformation.
Its initialization (lines~1--6) mirrors that of SJT and ADT.
Note that, since SDT does not hold for multisets, we consider only its implementation under set semantics.
Then, set operation nodes are created to implement the symmetric difference transformation.
Specifically, two \texttt{EXCEPT} nodes compute the differences $A \setminus B$ and $B \setminus A$ from the left and right join ASTs (lines~10--11). 
These different nodes are then combined under a \texttt{UNION} node to form the symmetric difference $(A \setminus B) \cup (B \setminus A)$ (lines~12--13).
Finally, another \texttt{EXCEPT} node subtracts this symmetric difference from the full union of the left and right join ASTs (line~14), producing the transformed AST that preserves the intersection semantics.

\subsection{Database and Query Generation}

Generating high-quality databases and queries for testing has been extensively studied and is not the primary concern of this work.
JoinEquiv is compatible with any database and query generator capable of producing valid SQL queries containing \texttt{INNER/NATURAL JOIN} operators.
Here, we provide details of our implementation for reproducibility.
The test case generation builds upon the SQLancer framework, which provides a grammar-aware SQL query generation engine.
We extend SQLancer to satisfy an additional precondition that all columns are \texttt{NOT NULL}.
Specifically, Step~\circledsmall{1} of Fig.~\ref{fig:overview} randomly generates tables and rows using \texttt{CREATE TABLE} and \texttt{INSERT} statements under \texttt{NOT NULL} constraints.
Step~\circledsmall{2} then constructs the \emph{original queries}, where JoinEquiv randomly selects a subset of tables from the generated schema to participate in the join.
The order of table references is also randomized.
Across test iterations, the roles of participating tables (e.g., \texttt{t0} and \texttt{t1}) are permuted by a randomized generation engine. Although the transformation templates use a fixed \texttt{LEFT}/\texttt{RIGHT JOIN} structure, symmetric variants are implicitly instantiated through this randomized table binding. For instance, expressions such as \texttt{t1 LEFT JOIN t0} naturally arise, which are algebraically equivalent to \texttt{t0 RIGHT JOIN t1}. As a result, symmetric forms of SJT, ADT, and SDT are repeatedly exercised over long-running testing without explicitly enumerating all symmetric variants.
Such a table permutation, combined with SQLancer's randomized population of \texttt{ON} predicates, \texttt{WHERE} clauses, and \texttt{SELECT} list expressions, ensures that the framework explores a diverse range of symmetric and asymmetric execution plans.

\section{Theoretical Analysis of Transformation Equivalence}
\label{sec:proofs}

The correctness of our fuzzing methodology relies on the semantic equivalence of the query transformations we employ.
This section provides a theoretical analysis of the equivalence of the three transformation rules: SJT, ADT, and SDT.
We rigorously examine their validity under both set semantics (corresponding to \texttt{SELECT DISTINCT}) and multiset semantics (corresponding to standard \texttt{SELECT}).

\subsection{Preliminaries}

To ensure the clarity of our proofs, we operate under the assumption that all columns in the involved tables are defined as \texttt{NOT NULL}.
This allows us to map SQL operations directly to the classical set and multiset theory without the complexities of three-valued logic.
Let $R$ and $S$ be two relations (tables).
We establish the following mappings from SQL operators to their mathematical counterparts:

\para{Set Semantics.}
This model applies to queries using \texttt{SELECT DISTINCT} or standard set operators (\texttt{UNION}, \texttt{EXCEPT}, \texttt{INTERSECT}), which implicitly perform duplicate elimination.
\begin{itemize}
  \item \texttt{R INNER JOIN S} maps to the theta-join $R \bowtie_\theta S$, where $\theta$ is the join condition.
  
  \item \texttt{R LEFT JOIN S ON $\theta$} maps to the left outer theta-join, denoted $R \bowtie_\theta^l S$ := $(R \bowtie_\theta S) \cup R_{\text{unmatched}}$.
  
  \item \texttt{R RIGHT JOIN S ON $\theta$} maps to the right outer theta-join, denoted $R \bowtie_\theta^r S$ := $(R \bowtie_\theta S) \cup S_{\text{unmatched}}$.

  \item \texttt{UNION}, \texttt{EXCEPT}, and \texttt{INTERSECT} (without \texttt{ALL}) map to standard set operators $\cup$, $\setminus$, and $\cap$. Their \texttt{ALL} variants correspond to multiset operations that preserve duplicates.
\end{itemize}

\para{Multiset Semantics.}
Let $A$ and $B$ be two multisets, and let $t$ be a tuple.
The count of $t$ in a multiset $M$ is denoted by $\mathrm{count}(t, M)$.
The multiset semantics for the SQL \texttt{ALL} set operators are defined as follows:
\begin{align}
    \mathrm{count}(t, A \cap B) &= \min(\mathrm{count}(t, A), \mathrm{count}(t, B)), \\
    \mathrm{count}(t, A \cup B) &= \mathrm{count}(t, A) + \mathrm{count}(t, B), \\
    \mathrm{count}(t, A \setminus B) &= \max(0, \mathrm{count}(t, A) - \mathrm{count}(t, B)),
\end{align}
where the operations on the left correspond to \texttt{INTERSECT ALL}, \texttt{UNION ALL}, and \texttt{EXCEPT ALL}, respectively.

\subsection{Proofs of Equivalence under Set Semantics}
Under set semantics, all three transformation rules are provably equivalent to an \texttt{INNER JOIN}.
The equivalence trivially extends to \texttt{NATURAL JOIN}, as it can be regarded as an \texttt{INNER JOIN} with a specific predicate defined on all common attributes.

\begin{proposition}
The SJT rule is equivalent to an \texttt{INNER JOIN} under set semantics, i.e.,
\[
  (R \bowtie_\theta^l S) \cap (R \bowtie_\theta^r S) \equiv R \bowtie_\theta S.
\]
This equivalence also holds for \texttt{NATURAL JOIN}, which is a special case of $\bowtie_\theta$ where $\theta$ equates all common attributes.
\end{proposition}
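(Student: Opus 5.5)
I will prove the identity by double inclusion, working directly from the definitions in the preliminaries: $R \bowtie_\theta^l S = (R \bowtie_\theta S) \cup R_{\text{unmatched}}$ and $R \bowtie_\theta^r S = (R \bowtie_\theta S) \cup S_{\text{unmatched}}$. Distributivity of $\cap$ over $\cup$ then gives
\[
(R \bowtie_\theta S) \cup \bigl((R \bowtie_\theta S)\cap S_{\text{unmatched}}\bigr) \cup \bigl(R_{\text{unmatched}} \cap (R \bowtie_\theta S)\bigr) \cup \bigl(R_{\text{unmatched}} \cap S_{\text{unmatched}}\bigr).
\]
It therefore suffices to show that each of the last three intersections is empty. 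The inclusion $R \bowtie_\theta S \subseteq$ LHS is immediate, since the joined tuples lie in both outer joins.

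To make emptiness precise, I fix the representation of tuples. Every output tuple lives over the full schema $\mathrm{attr}(R)\cup\mathrm{attr}(S)$. A tuple in $R_{\text{unmatched}}$ is $r$ padded with \texttt{NULL} on the $S$-attributes, for some $r\in R$ with no $\theta$-partner. Symmetrically, a tuple in $S_{\text{unmatched}}$ is padded with \texttt{NULL} on the $R$-attributes. A tuple in $R\bowtie_\theta S$ is a concatenation $r\circ s$ of genuine rows. By the standing \texttt{NOT NULL} assumption, the tuple $r\circ s$ contains no \texttt{NULL} at all. A padded tuple contains at least one \texttt{NULL}, provided the padded side has at least one attribute. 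This distinguishes joined tuples from padded ones, so the second and third intersections are empty. For the fourth, a tuple padded on the $S$ side has non-\texttt{NULL} values on $R$'s attributes, while one padded on the $R$ side has \texttt{NULL} there. Hence they differ, and the intersection is empty.

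The main obstacle is that the statement is phrased at the level of full joined tuples, whereas SQL queries project onto a \texttt{SELECT} list. After projection, for example to $t0.c0$ alone, an unmatched $R$-row projects to a non-\texttt{NULL} value that may coincide with the projection of some joined tuple. I will therefore state explicitly that the proposition concerns the unprojected join results, or more generally projections that retain at least one attribute from each side. In that setting the \texttt{NULL}-padding argument survives. I will note this boundary rather than claim more.

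Under set semantics the \texttt{DISTINCT} projection $\pi$ must be applied to each operand before \texttt{INTERSECT}, and one would need $\pi(A)\cap\pi(B)=\pi(A\cap B)$. In general only $\supseteq$ holds. I will add a remark that equality follows when $\pi$ keeps at least one column from each side, because padded tuples then remain distinguishable by their \texttt{NULL}s. Finally, \texttt{NATURAL JOIN} follows by instantiating $\theta$ as the conjunction of equalities on the common attributes. The projection that merges common columns is handled exactly as above.
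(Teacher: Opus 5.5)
Your proof is correct and follows the paper's route: write each outer join as $R\bowtie_\theta S$ plus its \texttt{NULL}-padded unmatched tuples, then use disjointness. It improves on the paper by deriving that disjointness from the \texttt{NOT NULL} assumption, which the paper only asserts.

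In your projection aside the obstacle is overstated. Since $\pi$ distributes over $\cup$ and $(P\cup X)\cap(P\cup Y)=P\cup(X\cap Y)$, only $\pi(R_{\text{unmatched}})\cap\pi(S_{\text{unmatched}})$ can spoil equality. That term is already empty when $\pi$ keeps one bare column from \emph{either} side, so overlap with $\pi(R\bowtie_\theta S)$ is harmless and the paper's \texttt{SELECT t0.c0} examples are covered. For a \texttt{NATURAL JOIN} projected only onto merged common columns, your \texttt{NULL} argument does not apply, and emptiness must come from $\theta$ itself: an unmatched $R$-row and an unmatched $S$-row agreeing there would satisfy $\theta$.
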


\begin{proof}
By definition, the left and right outer joins can be written as
\begin{align}
  R \bowtie_\theta^l S &= (R \bowtie_\theta S) \cup R_{\text{unmatched}}, \label{eq:lhs_outerjoin} \\
  R \bowtie_\theta^r S &= (R \bowtie_\theta S) \cup S_{\text{unmatched}}. \label{eq:rhs_outerjoin}
\end{align}
Intersecting these two sets gives
\begin{align}
  (R \bowtie_\theta^l S) \cap (R \bowtie_\theta^r S)
  & = \big( (R \bowtie_\theta S) \cup R_{\text{unmatched}} \big) \notag \\
  & \quad \cap \big( (R \bowtie_\theta S) \cup S_{\text{unmatched}} \big).
\end{align}

Since the sets $R \bowtie_\theta S$, $R_{\text{unmatched}}$, and $S_{\text{unmatched}}$
are mutually disjoint, their intersection eliminates all unmatched tuples, leaving only the common part $(R \bowtie_\theta S)$. Hence,
\begin{align}
  (R \bowtie_\theta^l S) \cap (R \bowtie_\theta^r S) = R \bowtie_\theta S,
\end{align}
which concludes the proof.
\end{proof}

\begin{proposition}
The ADT rule is equivalent to an \texttt{INNER JOIN} under set semantics, i.e.,
\[
  (R \bowtie_\theta^l S) \setminus \big( (R \bowtie_\theta^l S) \setminus (R \bowtie_\theta^r S) \big) \equiv R \bowtie_\theta S.
\]
\end{proposition}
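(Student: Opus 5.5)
The plan is to reduce the ADT identity to the set-theoretic law $A \setminus (A \setminus B) = A \cap B$ and then invoke the SJT proposition just proved. Write $A := R \bowtie_\theta^l S$ and $B := R \bowtie_\theta^r S$. The left-hand side of the statement is then $A \setminus (A \setminus B)$, so it suffices to show this equals $A \cap B$. After that, the preceding proposition gives $A \cap B = R \bowtie_\theta S$.

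The first step is the double-difference law for ordinary sets. I will prove it elementwise. A tuple $t$ lies in $A \setminus (A \setminus B)$ iff $t \in A$ and $\neg(t \in A \wedge t \notin B)$. Given $t\in A$, the second conjunct reduces to $t \in B$, so the whole condition is equivalent to $t \in A \wedge t \in B$, i.e.\ $t \in A \cap B$. This is a one-line Boolean simplification and needs no property of joins.

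The second step is to substitute back, obtaining $(R \bowtie_\theta^l S) \setminus \big((R \bowtie_\theta^l S) \setminus (R \bowtie_\theta^r S)\big) = (R \bowtie_\theta^l S) \cap (R \bowtie_\theta^r S)$. The SJT proposition then identifies the right-hand side with $R \bowtie_\theta S$. As a sanity check I would also verify this directly using the decompositions \eqref{eq:lhs_outerjoin}--\eqref{eq:rhs_outerjoin}. Since $R\bowtie_\theta S$, $R_{\text{unmatched}}$ and $S_{\text{unmatched}}$ are mutually disjoint, $A\setminus B = R_{\text{unmatched}}$, and removing it from $A$ leaves exactly $R \bowtie_\theta S$. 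The claim for \texttt{NATURAL JOIN} follows as before, by taking $\theta$ to equate the common attributes.

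The only delicate point, rather than a real obstacle, is justifying that disjointness, specifically that a null-padded unmatched row of $R$ never coincides with a row of $R\bowtie_\theta S$ or of $S_{\text{unmatched}}$. This is the one place where the standing \texttt{NOT NULL} assumption is used. Rows of $R \bowtie_\theta S$ carry non-null values in every column, while rows of $R_{\text{unmatched}}$ carry \texttt{NULL} in the $S$-columns, so they differ as tuples. The SJT proof already relies on the same fact, so I would simply cite it there. I would also note that it matters only for the direct verification, since the route via the double-difference law inherits it from the SJT proposition.
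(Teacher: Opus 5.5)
Your proof is correct, but your main route differs from the paper's. The paper neither uses the double-difference law nor cites the SJT proposition. It substitutes \eqref{eq:lhs_outerjoin} and \eqref{eq:rhs_outerjoin} directly, computes the inner difference $(R \bowtie_\theta^l S)\setminus(R \bowtie_\theta^r S)=R_{\text{unmatched}}$, and then removes $R_{\text{unmatched}}$ from the left outer join; this is exactly your sanity check. Your factorization makes ADT a corollary of SJT plus a join-agnostic identity. It is the set-level analogue of the count identity $\max(0,a-\max(0,a-b))=\min(a,b)$ that the paper itself uses for ADT under multiset semantics, so the set and multiset cases get the same structure. It also needs less disjointness. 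With $J=R\bowtie_\theta S$, one has $(J\cup R_{\text{unmatched}})\cap(J\cup S_{\text{unmatched}})=J\cup(R_{\text{unmatched}}\cap S_{\text{unmatched}})$, so only $R_{\text{unmatched}}\cap S_{\text{unmatched}}=\emptyset$ is used. The paper's intermediate equality $A\setminus B=R_{\text{unmatched}}$ additionally needs $R_{\text{unmatched}}\cap J=\emptyset$. That can fail once a \texttt{SELECT} list keeps only some columns, since an unmatched row of $R$ may then project onto the same tuple as a matched one, even though the final identity still holds. In return, the paper's direct computation is self-contained and shows the result the inner \texttt{EXCEPT} subquery should produce. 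That is what one inspects when decomposing a transformed query to localize a bug.
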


\begin{proof}
By Eqs.~\eqref{eq:lhs_outerjoin} and \eqref{eq:rhs_outerjoin}, the inner difference simplifies as
\begin{align}
  (R \bowtie_\theta^l S) \setminus (R \bowtie_\theta^r S) = R_{\text{unmatched}},
\end{align}
since the matched tuples $(R \bowtie_\theta S)$ are contained in both sets and the unmatched tuples from $S$ do not affect the difference.

Substituting back, the above expression becomes
\begin{align}
  (R \bowtie_\theta^l S) \setminus R_{\text{unmatched}} = R \bowtie_\theta S,
\end{align}
which proves the equivalence under set semantics.
\end{proof}

\begin{proposition}
The SDT rule is equivalent to an \texttt{INNER JOIN} under set semantics, i.e.,
\begin{align}
  & \big( (R \bowtie_\theta^l S) \cup (R \bowtie_\theta^r S) \big) \notag \\
  & \quad \setminus \big( ((R \bowtie_\theta^l S) \setminus (R \bowtie_\theta^r S)) \cup ((R \bowtie_\theta^r S) \setminus (R \bowtie_\theta^l S)) \big) \notag \\
  & \equiv R \bowtie_\theta S.
\end{align}
\end{proposition}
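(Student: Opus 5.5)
The plan follows the same route as the proofs of the SJT and ADT propositions. I abbreviate $J := R \bowtie_\theta S$, $U_R := R_{\text{unmatched}}$, and $U_S := S_{\text{unmatched}}$. By Eqs.~\eqref{eq:lhs_outerjoin} and \eqref{eq:rhs_outerjoin}, $L := R \bowtie_\theta^l S = J \cup U_R$ and $P := R \bowtie_\theta^r S = J \cup U_S$.

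The first step is to state explicitly that $J$, $U_R$ and $U_S$ are pairwise disjoint. This is the property already used in the SJT proof. The justification is that, under the \texttt{NOT NULL} assumption, every tuple of $J$ has non-null values in all columns of both $R$ and $S$. By contrast, a tuple of $U_R$ is padded with nulls in the $S$-columns, and a tuple of $U_S$ is padded with nulls in the $R$-columns.

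With disjointness in hand, I compute the pieces one at a time.
\begin{itemize}
\item The full union is $L \cup P = J \cup U_R \cup U_S$.
\item The left difference is $L \setminus P = U_R$, exactly as in the ADT proof.
\item Symmetrically, $P \setminus L = U_S$, since $J \subseteq L$ and $U_S \cap L = \emptyset$.
\end{itemize}
The symmetric difference is therefore $U_R \cup U_S$. The whole expression becomes
\[
(J \cup U_R \cup U_S) \setminus (U_R \cup U_S) = J,
\]
and the final equality again uses $J \cap (U_R \cup U_S) = \emptyset$.

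Alternatively, the pure set-theoretic identity $(A\cup B)\setminus\big((A\setminus B)\cup(B\setminus A)\big) = A\cap B$ holds for arbitrary sets, by an elementwise membership check. Combining it with the SJT proposition, $L \cap P = J$, gives the result directly. I would present the identity-plus-SJT route as the main argument, because it reuses the earlier proposition. The explicit computation can serve as a sanity check.

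The only delicate point is the disjointness claim, which rests on the way nulls pad unmatched tuples. This is also a point where SQL's \texttt{EXCEPT} and \texttt{UNION} treat nulls as equal. If the projected \texttt{SELECT} list drops the padded columns, the image sets could overlap. I would therefore state the proposition at the level of full joined tuples, as the formula does, and note that the \texttt{NATURAL JOIN} case follows exactly as in the SJT proposition.
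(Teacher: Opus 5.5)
Your proposal is correct and matches the paper's proof. The paper likewise sets $A = R \bowtie_\theta^l S$ and $B = R \bowtie_\theta^r S$, cites the identity $(A\cup B)\setminus((A\setminus B)\cup(B\setminus A)) = A\cap B$, and then computes $A\setminus B = R_{\text{unmatched}}$ and $B\setminus A = S_{\text{unmatched}}$ before subtracting to obtain $R \bowtie_\theta S$. The differences are minor: you justify the pairwise disjointness through the \texttt{NOT NULL} and null-padding argument, which the paper only asserts, and you close the identity route by citing the SJT proposition rather than recomputing.
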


\begin{proof}
Let $A = R \bowtie_\theta^l S$ and $B = R \bowtie_\theta^r S$. 
By the standard set identity, we have
\[
  A \cap B = (A \cup B) \setminus \big( (A \setminus B) \cup (B \setminus A) \big),
\]
the SDT expression computes exactly $A \cap B$.

From Eqs.~\eqref{eq:lhs_outerjoin} and \eqref{eq:rhs_outerjoin}, we have
\begin{align}
  A \setminus B 
  & = (R \bowtie_\theta S) \cup R_{\text{unmatched}} \setminus \big( (R \bowtie_\theta S) \cup S_{\text{unmatched}} \big) \notag \\
  & = R_{\text{unmatched}},\\
  B \setminus A 
  & = (R \bowtie_\theta S) \cup S_{\text{unmatched}} \setminus\ \big( (R \bowtie_\theta S) \cup R_{\text{unmatched}} \big) \notag \\
  & = S_{\text{unmatched}}.
\end{align}
Then, we have
\begin{align}
  (A \setminus B) & \cup (B \setminus A) = R_{\text{unmatched}} \cup S_{\text{unmatched}},\\
  A \cup B &= (R \bowtie_\theta S) \cup R_{\text{unmatched}} \cup S_{\text{unmatched}}.
\end{align}

Subtracting the symmetric difference from the union eliminates all unmatched tuples, 
leaving only matched tuples, i.e.,
\begin{align}
  (A \cup B) \setminus ((A \setminus B) \cup (B \setminus A)) = R \bowtie_\theta S.
\end{align}
Hence, the SDT rule is equivalent to an \texttt{INNER JOIN} under set semantics.
\end{proof}

\subsection{Proofs of Equivalence under Multiset Semantics}

Under multiset semantics, the validity of the transformation rules diverges.

\begin{proposition}
The SJT and ADT rules remain equivalent to an \texttt{INNER JOIN} under multiset semantics.
\end{proposition}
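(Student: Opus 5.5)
The plan is to argue tuple by tuple, using the $\mathrm{count}$ formalism from the preliminaries. Write the outer joins as multiset sums: $A = R \bowtie_\theta^l S = J \uplus U_R$ and $B = R \bowtie_\theta^r S = J \uplus U_S$. Here $J = R \bowtie_\theta S$ is the bag inner join, $U_R$ is the bag of unmatched $R$-tuples padded with \texttt{NULL}s, and $U_S$ is the bag of unmatched $S$-tuples padded with \texttt{NULL}s. This is the multiset analogue of Eqs.~\eqref{eq:lhs_outerjoin} and~\eqref{eq:rhs_outerjoin}, so that $\mathrm{count}(t,A)=\mathrm{count}(t,J)+\mathrm{count}(t,U_R)$, and similarly for $B$. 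The key structural fact is that, under the \texttt{NOT NULL} assumption, the supports of $J$, $U_R$ and $U_S$ are pairwise disjoint. A tuple of $J$ has no \texttt{NULL}s. A tuple of $U_R$ has \texttt{NULL}s exactly in the $S$-attributes and none in the $R$-attributes. A tuple of $U_S$ has the reverse pattern. These three patterns are distinguishable. Consequently, every tuple $t$ falls into exactly one of three cases: it occurs only in $J$, only in $U_R$, or only in $U_S$ (or in none of them).

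\textbf{SJT.} In each case I would evaluate $\min(\mathrm{count}(t,A),\mathrm{count}(t,B))$ directly.
\begin{itemize}
\item If $t\in J$, both counts equal $\mathrm{count}(t,J)$, so the minimum is $\mathrm{count}(t,J)$.
\item If $t\in U_R$, the count in $B$ is $0$, so the minimum is $0$.
\item If $t\in U_S$, the count in $A$ is $0$, so the minimum is $0$.
\end{itemize}
Hence $\mathrm{count}(t, A\cap B)=\mathrm{count}(t,J)$ for every $t$, which is the claim.

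\textbf{ADT.} I would first prove the general multiset identity $\max(0, a-\max(0,a-b)) = \min(a,b)$ for nonnegative integers $a,b$. This follows by splitting into the cases $a\le b$, which gives $a-0=a$, and $a>b$, which gives $a-(a-b)=b$. The identity gives $A\setminus(A\setminus B)=A\cap B$ for \texttt{EXCEPT ALL} and \texttt{INTERSECT ALL}, and the statement then reduces to the SJT case. As a sanity check, I would also compute directly: $\mathrm{count}(t,A\setminus B)$ equals $\mathrm{count}(t,U_R)$ on $U_R$ and $0$ elsewhere, and subtracting this from $A$ leaves exactly $J$.

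\textbf{Main obstacle.} The step I expect to require the most care is justifying the disjoint-support decomposition, especially when a projection is applied in the \texttt{SELECT} list. If the query projects only $R$-columns, a padded tuple from $U_R$ can coincide with a projected tuple from $J$, and disjointness fails. I would handle this by stating the proposition for the unprojected join result, which is what the relational-algebra statement refers to. I would then remark that projection must be applied to the join result $J$ itself, not to $A$ and $B$ separately; when the select list projects away the distinguishing columns, the equivalence is guaranteed only before projection.
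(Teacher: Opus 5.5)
Your proposal is correct and follows essentially the paper's argument: the paper likewise computes $\min(\mathrm{count}(t,A),\mathrm{count}(t,B))$ for SJT, relying on the disjointness of matched and unmatched tuples that it states only implicitly and you justify explicitly via the \texttt{NULL} patterns, and it uses the same identity $\max(0,a-\max(0,a-b))=\min(a,b)$ for ADT. Your projection caveat is more conservative than necessary, because only the two padded parts need disjoint supports ($\min(j+u_R,\,j+u_S)=j$ whenever $\min(u_R,u_S)=0$), and this still holds whenever the select list keeps at least one plain column of $R$ or $S$, since such a column is \texttt{NULL} on exactly one of the two padded parts.
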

\begin{proof}
For SJT, the multiplicity of a tuple $t$ in the result is
\begin{align}
  \min(\mathrm{count}(t, R \bowtie_\theta^l S), \mathrm{count}(t, R \bowtie_\theta^r S)),
\end{align}
which correctly preserves the multiplicity of tuples from the inner join while excluding unmatched tuples (with count zero in one of the sets).

For ADT, the identity
\begin{equation*}
\begin{split}
    \max\left(0,\, \mathrm{count}(t, A) - \max(0,\, \mathrm{count}(t, A) - \mathrm{count}(t, B))\right) \\
    = \min(\mathrm{count}(t, A), \mathrm{count}(t, B))
\end{split}
\end{equation*}
holds for all non-negative counts.
Therefore, ADT correctly preserves tuple multiplicities under multiset semantics.
\end{proof}

\begin{proposition}
The SDT rule is \textbf{not} equivalent to an \texttt{INNER JOIN} under multiset semantics.
\end{proposition}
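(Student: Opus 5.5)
Since the statement is a negative claim, a single explicit counterexample suffices. The plan is to compute the SDT expression tuple by tuple using the multiset counting rules for \texttt{UNION ALL} and \texttt{EXCEPT ALL} from the Preliminaries, and to exhibit one tuple whose count differs from its count in $R \bowtie_\theta S$.

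First, fix a tuple $t$ and write $a = \mathrm{count}(t, A)$ and $b = \mathrm{count}(t, B)$, where $A = R \bowtie_\theta^l S$ and $B = R \bowtie_\theta^r S$ as in the SDT proof under set semantics. The count of $t$ in $A \cup B$ is $a+b$. The count of $t$ in $(A\setminus B)\cup(B\setminus A)$ is $\max(0,a-b)+\max(0,b-a) = |a-b|$. Therefore the count of $t$ in the SDT result is
\[
  \max\bigl(0,\, a+b-|a-b|\bigr) = 2\min(a,b).
\]
The target, by the SJT argument of the previous proposition, is $\min(a,b)$. Hence SDT doubles the multiplicity of every matched tuple and agrees with the inner join only when $\min(a,b)=0$.

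Second, give a concrete instance to make the non-equivalence tangible. Take $R$ and $S$ each containing a single row with $c0 = 1$, and let $\theta$ be $R.c0 = S.c0$. Under the \texttt{NOT NULL} assumption, the matched tuple $t=(1,1)$ has $a=b=1$. The inner join returns $t$ once, while the SDT expression evaluated with \texttt{ALL} operators returns it twice. Checking the intermediate results explicitly is then routine: $A \cup B$ gives $\{t,t\}$, the symmetric difference is empty, and the final difference is $\{t,t\}$. This instance alone proves the proposition. I would also note the general formula $2\min(a,b)$, which explains the failure structurally: \texttt{UNION ALL} is additive rather than a maximum, so inclusion–exclusion overcounts.

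The main obstacle is interpretive rather than computational. ``SDT under multiset semantics'' means replacing every operator in Algorithm~\ref{alg:sdt_corrected} by its \texttt{ALL} variant, and I would state this reading explicitly. I would also remark that even a partially \texttt{ALL} variant fails on the same instance: if the outer \texttt{UNION} is \texttt{ALL}, the result still has count $2$, and otherwise duplicates of matched tuples are collapsed. This keeps the counterexample robust to the reading chosen.
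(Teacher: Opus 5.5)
Your proof is correct, but it takes a different route from the paper.

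\textbf{The paper's route.} The paper works with abstract multisets. It takes $A=\{a,a\}$ and $B=\{a,b\}$ and evaluates the SDT expression with \texttt{ALL} operators, getting $\{a,a\}$. It contrasts this with the target $A$ \texttt{INTERSECT ALL} $B=\{a\}$. It relies on the preceding proposition to identify that intersection with the inner join, and it never exhibits relations $R,S,\theta$ that produce these outer-join results.

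\textbf{Your route.} You derive the per-tuple closed form $2\min(a,b)$. You then give a relation-level instance ($R=S=\{(1)\}$, $\theta$ an equality) in which the inner join is computed directly, so the instance itself needs no appeal to SJT. This gives you more than the paper's example:
\begin{itemize}
\item Every tuple satisfies $a,b\ge \mathrm{count}(t,R\bowtie_\theta S)$. Hence your formula shows that SDT with \texttt{ALL} operators at least doubles every tuple of any nonempty inner join.
\item It locates the cause in the additivity of \texttt{UNION ALL}.
\item Because your instance is duplicate-free, it shows that duplicates in the inputs are not what breaks SDT. So the paper's closing Remark (duplicate-free inputs suffice) holds only if the set operators themselves are the \texttt{DISTINCT} variants.
\end{itemize}
The paper's route is shorter and shows directly that the underlying set identity fails for bags, independently of any join.

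\textbf{A side remark that is wrong.} Your final claim, that the same instance defeats every partially-\texttt{ALL} reading, is false, though the proof does not need it. If the full \texttt{UNION} or the final \texttt{EXCEPT} is the \texttt{DISTINCT} variant, SDT on $R=S=\{(1)\}$ returns $\{t\}$ once. That matches the inner join. An instance whose inner join already contains duplicates does defeat every reading: take $R=\{(1),(1)\}$ and $S=\{(1)\}$, so $a=b=2$ and the target count is $2$. Every reading then yields count $4$ or $1$.
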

\begin{proof}
Given two multisets $A = \{a, a\}$ and $B = \{a, b\}$, the target result for $A \texttt{ INTERSECT ALL } B$ is $\{a\}$.
Applying the SDT transformation, we have
\begin{enumerate}
    \item Full Union ($A \cup B$): $\{a, a, a, b\}$;
    \item Symmetric Difference ($(A \setminus B) \cup (B \setminus A)$): $\{a\} \cup \{b\} = \{a, b\}$;
    \item Final Result ($(A \cup B) \setminus (\text{Sym. Diff.})$): $\{a, a, a, b\} \setminus \{a, b\} = \{a, a\}$.
\end{enumerate}
Since $\{a, a\} \neq \{a\}$ under multiset semantics, the SDT rule is generally not valid for multisets.
\end{proof}

\noindent\textbf{Remark:}
Finally, we discuss in which cases SDT is valid under multiset semantics.
The most direct sufficient condition for SDT's validity is when the input multisets are duplicate-free, effectively degenerating to set semantics (i.e., for all tuples $t$, $\mathrm{count}(t,\cdot)\in\{0,1\}$).
In this case, our proofs based on set algebra apply directly.
An equivalent statement is that SDT can be safely applied if the input relations have been explicitly de-duplicated (e.g., via a \texttt{DISTINCT} modifier) or if the context guarantees that the relations are unique on the projection of their join key columns.

\section{Experimental Evaluation}

In this section, we evaluate JoinEquiv in terms of its ability to discover new vulnerabilities, as well as its efficiency in exploring the state space of the target DBMSs.
Our evaluation aims at answering the following research questions:
\begin{itemize}
\item \textbf{RQ1:} Can JoinEquiv find real logic bugs in widely used and extensively tested DBMSs?
\item \textbf{RQ2:} How diverse are the logic bugs found?
\item \textbf{RQ3:} Can JoinEquiv find logic bugs that are missed by existing approaches?
\end{itemize}

\begin{table*}[t]
  \centering
  \caption{Summary of Logic Bug Reports and Verification Status in Different RDBMSs}
  \vspace{-1em}
  \label{tab:bug_status_professional}
  \footnotesize 
  \begin{tabular}{cccccccl}
    \toprule
    \textbf{RDBMS} & \textbf{Reported} & \textbf{Verified} & \textbf{Fixed} & \textbf{Intended} & \textbf{Component} & \textbf{Severity} & \textbf{Identifier} \\
    \midrule
    MySQL   & 10 & 8 & 1 & 2 & Optimizer (8) & Critical (8) & \makecell[l]{\texttt{Bug\#118544, Bug\#118684, Bug\#118710,} \\ \texttt{Bug\#118857, Bug\#118858, Bug\#118949,} \\ \texttt{Bug\#119032, Bug\#119059}} \\
    \addlinespace
    TiDB    & 13 & 13 & 4 & 0 & \makecell[c]{Planner (11) \\ Execution (2)} & Critical (3) & \makecell[l]{\texttt{\#62380, \#62444, \#62456, \#62459, \#62460,} \\\texttt{\#62644, \#62645, \#62689, \#63596, \#63601,} \\\texttt{\#63635, \#63636, \#63736}} \\
    \addlinespace
    Percona &  3 & 3 & 0 & 0 & Optimizer (3) & Critical (3) & \texttt{PS-10124, PS-10127, DISMYSQL-535} \\
    \addlinespace
    DuckDB & 3 & 3 & 3 & 0 & -- & -- & \texttt{\#20483, \#20486, \#20608} \\
    \cmidrule(r){1-8}
    \textbf{Total} & 29 & 27 & 8 & 2 & -- & Critical (14) & -- \\
    \bottomrule
  \end{tabular}
  \par
  \vspace{0.2em}
  \raggedright 
  \footnotesize
  \textbf{Note:} For DuckDB, component and severity were not provided, so they are marked as ``--''. \textbf{Verified} denotes that the bugs have been officially confirmed and accepted by the developers as valid logic defects.
\end{table*}

\subsection{Evaluation Setup}

\para{Tested DBMSs.}
We tested seven widely used and representative RDBMSs using JoinEquiv, namely MySQL, TiDB, Percona, PostgreSQL, CockroachDB, and SQLite.
These systems are selected as they collectively represent widely deployed single-node RDBMSs, MySQL-compatible variants, distributed SQL systems, and lightweight embedded databases.

\para{Environment.}
All experiments were conducted on a server running 64-bit Ubuntu 20.04.6 LTS.
The machine is equipped with two Intel\textsuperscript{\textregistered} Xeon\textsuperscript{\textregistered} Gold 5218R CPUs (@ 2.10\,GHz), each providing 20 physical cores (40 threads per socket), for a total of 80 hardware threads.

\para{Implementation.}
To ensure the robustness of results and mitigate the impact of randomness, we adopted a \emph{multi-instance parallel testing strategy}.
For each DBMS, we leverage the testing framework's multi-threading capability by launching multiple concurrent testing threads.
Each thread operates with an independent random state and query generation trajectory.
Within each thread, for every \texttt{INNER JOIN} query, JoinEquiv, TLP, and DQP are applied sequentially under the same system configurations and runtime environments.
The testing process ran continuously for 12 hours, effectively simulating repeated experiments across diverse random seeds.

\subsection{Overall Results for Logic Bug Detection}

We test all the target DBMSs with JoinEquiv.
The testing process adheres to a conventional software defect detection pipeline, including fuzzing-based test generation, test case reduction, manual deduplication, cross-DBMS verification, and issue reporting to developers.
As a result, 27 previously unknown logic inconsistencies were detected on MySQL, TiDB, Percona, and DuckDB (see Table~\ref{tab:bug_status_professional}).
Interestingly, we did not uncover many bugs in PostgreSQL, CockroachDB, or SQLite.
The underlying reasons for this ``bug-free'' behavior are analyzed in \S\ref{sec::dis}.

Among the reported bugs, Table~\ref{tab:bug_status_professional} summarizes the results of applying JoinEquiv to different DBMSs.
14 bugs have been verified as \emph{critical}, and 10 additional bugs in TiDB are still pending official severity classification. Additionally, there are two further test cases identified as intended behavior.
Note that different DBMSs adopt different severity levels; for consistency, we use the term \emph{critical} to unify the meanings of \emph{critical} and \emph{serious} in MySQL, \emph{critical} and \emph{major} in TiDB, and \emph{critical} and \emph{urgent} in Percona.
These results demonstrate that JoinEquiv can effectively discover real logic bugs in widely used and extensively tested DBMSs (answering \textbf{RQ1}).

\subsection{Diversity of Detected Logic Bugs}

As shown in Fig.~\ref{fig:dis}, the logic bugs detected by JoinEquiv exhibit remarkable diversity across both transformation rules and join types. Specifically, the three transformation rules demonstrate complementary detection capabilities across different DBMSs, collectively uncovering 27 distinct logic bugs in \texttt{INNER/NATURAL JOIN} operations.
For example, Listing~\ref{lst:MySQL_bug_example} illustrates a bug exposed by SJT, which reveals an inconsistency in NULL semantics handling; Listing~\ref{lst:Percona_bug_example} presents a bug triggered by ADT, exposing a logical vulnerability in predicate rewriting within the optimizer; and Listing~\ref{lst:TiDB_bug_example} shows an executor-level defect in TiDB discovered by SDT, leading to data representation inconsistencies.
Furthermore, the diverse bug distribution across MySQL, TiDB, Percona, and DuckDB indicates that JoinEquiv is independent of system-specific implementation details, as each DBMS exhibits distinct failure modes under different transformation rules.
Together, these findings demonstrate that JoinEquiv can uncover a broad, heterogeneous spectrum of logic bugs originating from multiple layers of the system, including the optimizer, executor, and set-operator implementations, thereby effectively answering \textbf{RQ2}.

\begin{figure}[t]
    \centering
    \includegraphics[width=.98\linewidth]{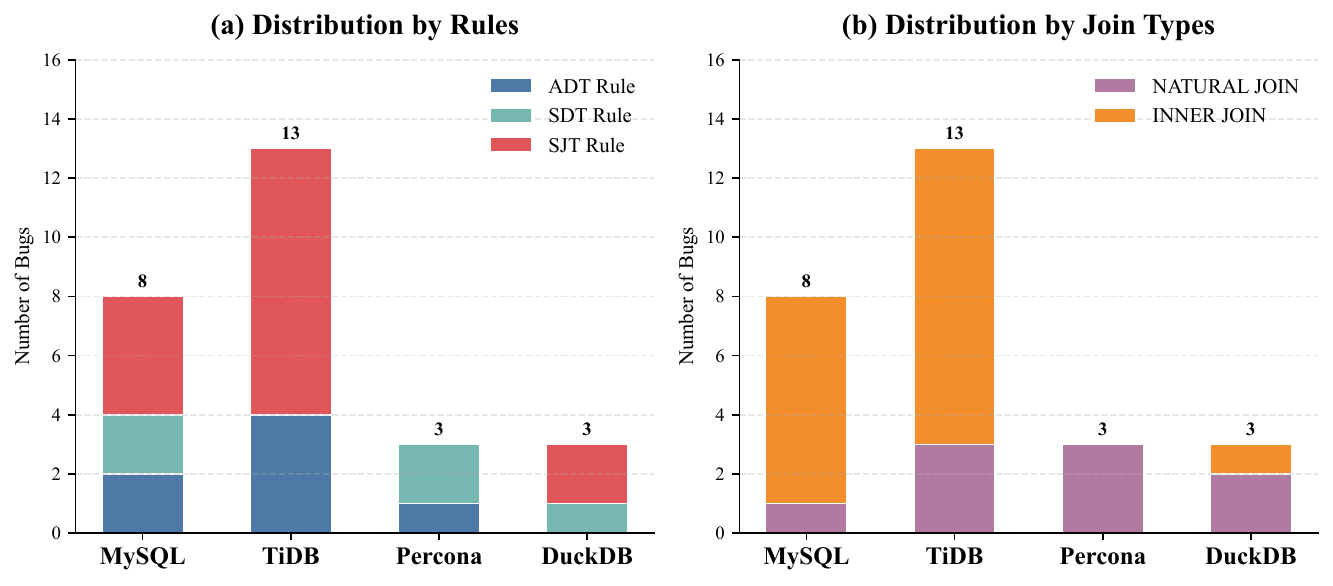}
    \vspace{-1em}
    \caption{Distribution and diversity of logic bugs found by JoinEquiv among different transformation rules and DBMSs.\label{fig:dis}}
\end{figure}

\subsection{Case Analysis for Detected Logic Bugs}

\para{Bug Triggered by SJT Rule.}
As shown in Listing~\ref{lst:MySQL_bug_example}, to investigate the root cause of the bug, we manually analyzed the execution process of the transformed query.
We first executed its two sub-queries independently to verify their correctness.
``\texttt{SELECT DISTINCT t0.c0 FROM t1 NATURAL LEFT JOIN t0}'' returned \texttt{NULL}, correctly filling unmatched \texttt{t1} rows with \texttt{NULL}, 
while ``\texttt{SELECT DISTINCT t0.c0 FROM t1 NATURAL RIGHT JOIN t0}'' returned \texttt{0}, correctly preserving the \texttt{t0} row.

\begin{lstlisting}[
  style=sqlstyle,
  caption={Incorrect 0 and NULL handling in the INTERSECT operator of MySQL.},
  label={lst:MySQL_bug_example}
]
CREATE TABLE t0(c0 BIGINT NOT NULL, c1 DECIMAL NOT NULL);
CREATE TABLE t1(c0 FLOAT NOT NULL);
INSERT INTO t0(c0, c1) VALUES(0, -1);
INSERT INTO t1(c0) VALUES(0.3);
(*@\colorbox{yellow!65}{\color{codegray}Original query}@*)
SELECT DISTINCT t0.c0 FROM t1 NATURAL JOIN t0;
-- empty set (*@\checkicon@*)
(*@\colorbox{yellow!65}{\color{codegray}Transformed query (SJT rule)}@*)
(SELECT DISTINCT t0.c0 FROM t1 NATURAL LEFT JOIN t0)
 INTERSECT
(SELECT DISTINCT t0.c0 FROM t1 NATURAL RIGHT JOIN t0); 
-- {0} (*@\bugicon@*)
\end{lstlisting}

Since both sub-queries return correct results, the error must originate from the final \texttt{INTERSECT} operator.
To confirm this, we analyzed the query plan generated by MySQL for $Q_{\text{meta}}$, as shown in Listing~\ref{lst:intersect-plan}. 
The execution plan shows that the \texttt{INTERSECT} operator receives two single-row inputs (\texttt{NULL} and \texttt{0}). 
However, it produces one row (\texttt{rows=1}), which contradicts the standard semantics of \texttt{INTERSECT}.
According to the SQL standard, \texttt{INTERSECT} should not treat \texttt{NULL} as equal to any non-\texttt{NULL} value during set matching. 
Therefore, the correct result of \{\texttt{NULL}\} \texttt{INTERSECT} \{\texttt{0}\} should be an \texttt{empty set}.
However, MySQL's executor incorrectly treats \texttt{NULL} as equivalent to \texttt{0} during the implementation of the \texttt{INTERSECT} operator, leading to an incorrect result.

\begin{lstlisting}[
  language=SQL,
  caption={Execution plan of the transformed query showing incorrect \texttt{INTERSECT} handling in MySQL.},
  label={lst:intersect-plan}]

-> Table scan on <intersect temporary> (*@\colorbox{red!30}{{(\color{codered}\bfseries rows}=1)}@*)
    -> Intersect materialize with deduplication
        -> (*@{\color{codered}\bfseries Left hash join}@*) (t1.c0 = t0.c0) (rows=1)
            -> ...
        -> (*@{\color{codered}\bfseries Left hash join}@*) (t0.c0 = t1.c0) (rows=1)
            -> ...
\end{lstlisting}

This case reveals a logical bug in MySQL's implementation of set operators: the \texttt{INTERSECT} operator fails to correctly handle the special semantics of \texttt{NULL} under SQL's three-valued logic.
While each individual outer join behaves correctly, the combined set operation introduces a semantic violation, exposing an inconsistency in MySQL's treatment of \texttt{NULL} during set-based evaluation.

\para{Bug Triggered by ADT Rule.}
ADT rule successfully revealed a subtle logical bug deep within the query optimizer. 
In Listing~\ref{lst:Percona_bug_example}, the original query contained a join predicate \texttt{t0.c0 < -0.1}, which is always false; thus, the theoretically correct result should be an empty set.
However, Percona Server incorrectly returned a non-empty result \(\{0, 1\}\).
Through an in-depth analysis of the execution plan for the original query, we identified the root cause: The optimizer performed an unsound predicate rewrite. 
In Listing~\ref{lst:except-plan}, it erroneously relaxes the filter condition \texttt{t0.c0 < -0.1} to \texttt{t0.c0 <= 0}.
This flawed optimization directly led to the non-empty result.

In contrast, the semantically equivalent complex query generated by the ADT rule successfully inhibited the defective optimization, forcing the DBMS to follow a more conservative and correct execution path and correctly returned an empty set.

\vspace{-0.1em}
\begin{lstlisting}[
  style=sqlstyle,
  caption={Incorrect predicate rewrite bug in the optimizer leading to a non-empty result in MySQL.},
  label={lst:Percona_bug_example},
]

CREATE TABLE t0(c0 DECIMAL PRIMARY KEY);
CREATE TABLE t1 LIKE t0;
INSERT INTO t0(c0) VALUES(0.28981613730687783);
INSERT INTO t1(c0) VALUES(0.9690807743979165);
(*@\colorbox{yellow!65}{\color{codegray}Original query}@*)
SELECT t0.c0, t1.c0 FROM t0 INNER JOIN t1 ON t0.c0<-0.1; 
-- {0|1} (*@\bugicon@*)
(*@\colorbox{yellow!65}{\color{codegray}Transformed query(ADT rule)}@*)
(SELECT t0.c0, t1.c0 FROM t0 LEFT JOIN t1 ON t0.c0<-0.1)
 EXCEPT ALL(
    (SELECT t0.c0, t1.c0 FROM t0 LEFT JOIN t1 ON t0.c0<-0.1)
     EXCEPT ALL
    (SELECT t0.c0, t1.c0 FROM t0 RIGHT JOIN t1 ON t0.c0<-0.1));
-- empty set (*@\checkicon@*)
\end{lstlisting}

\begin{lstlisting}[
  language=SQL,
  caption={Execution plan of the original query revealing an unsound predicate rewrite in MySQL.},
  label={lst:except-plan}]

Inner hash join (no condition)  (rows=1)
    -> Covering index scan on t1 using PRIMARY
    -> Hash
        -> Filter:(*@\colorbox{red!30}{(t0.c0 <= 0)}@*)(rows=1)
            -> ..
\end{lstlisting}

\para{Bug Triggered by SDT Rule.}
As shown in Listing~\ref{lst:TiDB_bug_example}, the SDT rule revealed a severe defect of \emph{data representation inconsistency} hidden in the TiDB query executor.
In this test case, the join predicate\footnote{The full join condition: \url{https://github.com/pingcap/tidb/issues/63736}} \emph{$\phi$} is a randomly generated but logically tautological expression (always evaluating to \texttt{TRUE}).
Therefore, the theoretically correct result of all three semantically equivalent queries should be \{\texttt{'7'} $|$ \texttt{'-'}\}.
However, TiDB returned three mutually inconsistent results for the three equivalent queries:
$Q_{\text{orig}}$ returned \{\texttt{0x37} $|$ \texttt{'-'}\}, incorrectly printing the character \texttt{7} as its hexadecimal byte value; $Q_{\text{orig}}$ transformed by the SJT rule returned \{\texttt{'7'} $|$ \texttt{'-'}\}, which happens to be the correct result; $Q_{\text{orig}}$ transformed by the SDT rule returned \{\texttt{0x37000000} $|$ \texttt{'-'}\}, showing a typical case of \emph{data corruption}, where the single-byte representation of \texttt{'7'} was erroneously padded with zero bytes.

This phenomenon clearly exposes a severe implementation inconsistency within TiDB when handling \texttt{CHAR}-typed data across different execution paths. 
By further decomposing the SDT-transformed query, we localized the corruption to the first \texttt{UNION} operation, suggesting that the defect was triggered during the merging of two intermediate result sets from outer joins. This indicates a deep-seated bug in TiDB's \emph{type metadata propagation} or memory layout management mechanisms.

\subsection{Comparison to Existing DBMS Testing Approaches}
\label{subsec:exist}

In order to evaluate JoinEquiv's unique ability to detect logical bugs related to \texttt{INNER JOIN}, we compare it with two representative metamorphic testing approaches: Ternary Logic Partitioning (TLP) and Differential Query Plans (DQP).
TLP shares a conceptual foundation with \textit{JoinEquiv}, as both leverage set-theoretic principles to construct semantically equivalent query variants, whereas DQP evaluates the robustness of query optimizers by executing the same query under different physical plans, a strategy that has proven highly effective in uncovering join-related logical bugs.

\begin{lstlisting}[
  style=sqlstyle,
  caption={A data representation inconsistency under complex set operations in TiDB.},
  label={lst:TiDB_bug_example}
]

CREATE TABLE t0(c0 CHAR NOT NULL);
CREATE TABLE t1 LIKE t0;
INSERT INTO t0(c0) VALUES ('-');
INSERT INTO t1(c0) VALUES ('7');
(*@\colorbox{yellow!65}{\color{codegray}Original query}@*)
(SELECT DISTINCT t1.c0, t0.c0 FROM t1 INNER JOIN t0 ON (*@$\phi$@*));
-- {0x37|'-'} (*@\bugicon@*)
(*@\colorbox{yellow!65}{\color{codegray}Transformed query(ADT rule)}@*)
(SELECT DISTINCT t1.c0, t0.c0 FROM t1 LEFT JOIN t0 ON (*@$\phi$@*)) 
 INTERSECT 
(SELECT DISTINCT t1.c0, t0.c0 FROM t1 RIGHT JOIN t0 ON (*@$\phi$@*));
-- {'7'|'-'} (*@\checkicon@*)
(*@\colorbox{yellow!65}{\color{codegray}Transformed query(SDT rule)}@*)
((SELECT DISTINCT t1.c0 FROM t1 LEFT JOIN t0 ON (*@$\phi$@*))
  UNION
 (SELECT DISTINCT t1.c0 FROM t1 RIGHT JOIN t0 ON (*@$\phi$@*))) 
  EXCEPT(
  ((SELECT DISTINCT t1.c0 FROM t1 LEFT JOIN t0 ON (*@$\phi$@*))
    EXCEPT
   (SELECT DISTINCT t1.c0 FROM t1 RIGHT JOIN t0 ON (*@$\phi$@*)))
    UNION
  ((SELECT DISTINCT t1.c0 FROM t1 RIGHT JOIN t0 ON (*@$\phi$@*))
    EXCEPT
   (SELECT DISTINCT t1.c0 FROM t1 LEFT JOIN t0 ON (*@$\phi$@*))));
-- {0x37000000|'-'} (*@\bugicon@*)
\end{lstlisting}

To ensure a fair and systematic comparison, we conduct two complementary experiments.
First, we perform a \textbf{cross-oracular validation} based on previously confirmed bugs detected by \textit{JoinEquiv}.
For each reported bug, we reconstruct equivalent test cases following TLP and DQP formulations to examine whether these approaches can reproduce the same erroneous behavior.
This validation allows us to assess the orthogonality and uniqueness of \textit{JoinEquiv}'s detection scope.
Second, we carry out a \textbf{unified cross-oracular evaluation}, in which all approaches are executed side by side for 12 hours under identical query generation and database states, to quantitatively compare their bug-finding capabilities. 
Crucially, for every generated original \texttt{INNER JOIN} query that can be correctly executed, we apply all three transformation strategies (TLP, DQP, and JoinEquiv) within the same iteration,  ensuring a fair comparison under an identical execution context.

To evaluate the effectiveness of JoinEquiv, we implemented two representative metamorphic testing approaches as baselines.
Ternary Logic Partitioning (TLP) (specifically, the \textsc{TLP-Where} variant) \cite{tlp} relies on the set-theoretic principle of \emph{UNION}.
As illustrated in Listing~\ref{lst:tlp-example}, TLP partitions an original query into three sub-queries by injecting predicates into the \texttt{WHERE} clause based on SQL's ternary logic (\texttt{TRUE}, \texttt{FALSE}, \texttt{NULL}) and validates that their combined result matches the original.
Differential Query Plans (DQP)~\cite{dqp}, in contrast, targets join optimizer robustness.
It forces the DBMS to generate diverse physical plans for the same query (via hints, see Listing~\ref{lst:tlp-example}) and checks for discrepancies ($R_i \neq R_j$) among the result sets.

\para{Cross-Oracular Validation of Reported Bugs.}
To assess the orthogonality of JoinEquiv's detection scope, we first perform a \emph{cross-oracular validation} on previously reported bugs. 
For each bug exposed by JoinEquiv, we manually rewrite the corresponding query into its TLP- and DQP-equivalent forms and execute them on the same database.

\begin{lstlisting}[
  style=sqlstyle,language=SQL,
  caption={Example of ternary logic partitioning (TLP)} and differential query plans (DQP).,
  label={lst:tlp-example}]

(*@\colorbox{yellow!65}{\color{codegray}Original query}@*)
SELECT * FROM t0 INNER JOIN t1 ON t0.c0 = t1.c0;
(*@\colorbox{yellow!65}{\color{codegray}TLP (Where) transformed query}@*)
SELECT * FROM t0 INNER JOIN t1 ON t0.c0 = t1.c0 WHERE TRUE
UNION ALL
SELECT * FROM t0 INNER JOIN t1 ON t0.c0 = t1.c0 WHERE FALSE 
UNION ALL
SELECT * FROM t0 INNER JOIN t1 ON t0.c0 = t1.c0 WHERE TRUE IS NULL;
(*@\colorbox{yellow!65}{\color{codegray}DQP transformed query}@*)
(*@\color{codegray}{Plan variant 1 (hash join) hint}@*)
SELECT /*+ HASH_JOIN(t0, t1) */ * FROM t0 INNER JOIN t1 ON t0.c0 = t1.c0;
(*@{\color{codegray}{Plan variant 2 (nested-loop join) hint}@*)
SELECT /*+ NESTED_LOOP_JOIN(t0, t1) */ * FROM t0 INNER JOIN t1 ON t0.c0 = t1.c0;
\end{lstlisting}

Formally, given an \texttt{INNER JOIN} query $Q_{\text{orig}}$ whose result differs from one or more of its algebraically equivalent transformed queries, we attempt to construct three predicate-partitioned queries according to TLP's formulation, and generate multiple physically equivalent variants of queries $Q_{\text{DQP}}$ using optimizer hints.
We leverage 32 optimizer hints for MySQL and Percona, and 22 for TiDB (sourced from SQLancer-DQP); DQP does not support DuckDB.  
If the same inconsistency appears, the bug is considered reproducible by TLP or DQP; Otherwise, it is classified as a \emph{Join-specific} bug.
Among the 27 bugs originally detected by JoinEquiv, none were reproduced by either baseline, demonstrating that JoinEquiv possesses a fundamental and irreplaceable advantage over existing approaches in the specific domain of \texttt{INNER JOIN} logic bugs. 

\begin{lstlisting}[
  style=sqlstyle,
  language=SQL,
  caption={TLP transformed query for Listing 3 and Listing 6},
  label={lst:tlp36}]

-- (*@\colorbox{yellow!65}{\color{codegray}TLP Transformed query for Listing 3}@*)
SELECT DISTINCT t0.c0 FROM t1 NATURAL JOIN t0 WHERE TRUE
UNION
SELECT DISTINCT t0.c0 FROM t1 NATURAL JOIN t0 WHERE FALSE
UNION
SELECT DISTINCT t0.c0 FROM t1 NATURAL JOIN t0 WHERE TRUE IS NULL; -- empty set (*@\checkicon@*)
-- (*@\colorbox{yellow!65}{\color{codegray}TLP Transformed query for Listing 6}@*)
SELECT t0.c0, t1.c0 FROM t0 INNER JOIN t1 ON t0.c0<-0.1 WHERE TRUE
UNION ALL
SELECT t0.c0, t1.c0 FROM t0 INNER JOIN t1 ON t0.c0<-0.1 WHERE FALSE
UNION ALL
SELECT t0.c0, t1.c0 FROM t0 INNER JOIN t1 ON t0.c0<-0.1 WHERE TRUE IS NULL; -- {0|1} (*@\bugicon@*)
\end{lstlisting}

During this validation, we observed an interesting phenomenon when applying the TLP strategy.
As shown in Listing~\ref{lst:MySQL_bug_example}, the original \texttt{NATURAL JOIN} query correctly returns an empty set, whereas its SJT-transformed counterpart incorrectly produces \{\texttt{0}\}.
We then rewrote the original query following the TLP's form, partitioning it into its corresponding sub-queries (as shown in Listing~\ref{lst:tlp36}).
Interestingly, all partitions also correctly return empty sets, and their union remains consistent with the original query result.
In contrast, as shown in Listing~\ref{lst:Percona_bug_example}, the original query incorrectly returns \{\texttt{0}$|$\texttt{1}\}.
When rewritten using the TLP approach, the partitioned queries yield the same erroneous result \{\texttt{0} $|$ \texttt{1}\},
whereas the ADT-transformed query correctly produces an empty set.
TLP cannot address the optimizer's semantic bias due to structural transformations during join rewriting.

\para{Unified Cross-Oracular Evaluation.}
To ensure a fair and quantitative comparison between metamorphic testing approaches, we implemented a unified framework that executes JoinEquiv, TLP, and DQP side by side with identical query generation and database states.
All three approaches share the same query generator, database state, random seeds, and execution environment, guaranteeing identical experimental conditions.
For each original query, all three transformations are applied sequentially, and the results are compared across the same DBMS instance.
This setup ensures a fair and controlled environment for large-scale evaluation.

\begin{table}[t]
  \centering
  \caption{Number of bugs reported in a 12h unified run (DQP results exclude DuckDB)}
  \vspace{-1em}
  \label{tab:cross_oracular_results}
  \begin{tabular}{cccc}
    \toprule
    \textbf{DBMS} &  \textbf{TLP} & \textbf{DQP} & \textbf{JoinEquiv} \\
    \midrule
    TiDB    & 0 & 0  & 7  \\
    MySQL   & 0 & 1  & 13 \\
    Percona & 0 & 2  & 3  \\
    DuckDB  & 0 & -- & 13 \\
    \midrule
    \textbf{Total} & 0 & 3 & 36 \\
    \midrule
    \textbf{Increment} & 36$\uparrow$ & 20$\uparrow$ & -- \\
    \bottomrule
  \end{tabular}
\end{table}

Table~\ref{tab:cross_oracular_results} summarizes the number of bugs found in the 12-hour run on MySQL, TiDB, Percona, and DuckDB.
It shows that JoinEquiv found 36 more join-related inconsistencies than TLP and 20 more than DQP (without DuckDB).
TLP and DQP focus on predicate-level or plan-level consistency. 
In contrast, JoinEquiv operates at the algebraic level of join semantics, leveraging simple yet powerful set-theoretic identities.
JoinEquiv systematically verifies the semantic correctness of join rewrites and set-operator interactions, which predicate partitioning or plan variation cannot cover. 
As a result, JoinEquiv can expose deep semantic inconsistencies arising from incorrect join rewriting, NULL propagation, and set operator evaluation.
In summary, JoinEquiv effectively uncovers logic bugs that advanced state-of-the-art approaches fail to detect, thereby answering \textbf{RQ3}.

\para{Code Coverage.}
We measure the code coverage achieved by JoinEquiv on each evaluated DBMS. Over a 12-hour period using 10 threads, it achieves line (statement) coverage of 26.7\%, 25.3\%, and 19.9\% for MySQL, Percona, and TiDB (statement), respectively\footnote{DuckDB's code coverage cannot be directly obtained because it is not straightforward to instrument its build and execution pipeline with standard coverage collection tools.}.
The relatively low code coverage mainly stems from our focus on the query optimization component rather than the full DBMS. 
It is also worth mentioning that prior work has reported no positive correlation between code coverage and testing effectiveness for DBMSs.

\section{Discussion}
\label{sec::dis}

\para{Handling of NULL Semantics.}
JoinEquiv currently restricts all columns to be declared as \texttt{NOT NULL}.
This is a deliberate design choice due to the fundamental semantic mismatch between SQL's three-valued logic (3VL) and set operators.
In the SQL standard, equality comparison predicates involving \texttt{NULL} are evaluated to \texttt{UNKNOWN} and therefore excluded from \texttt{INNER JOIN} results, whereas the set operator treats \texttt{NULL} as equal at the set level and preserves them.
As a result, certain equivalences that hold under classical relational algebra no longer hold in the presence of \texttt{NULL}. 
To avoid introducing spurious inconsistencies caused by semantic divergence, we restrict the current implementation to \texttt{NOT NULL} columns.
Extending JoinEquiv to explicitly reconcile 3VL with set semantics is an interesting direction for future work.

\para{False Positives and Type-Related Bugs.}
Ideally, JoinEquiv relies on proven set-theoretic identities and implies that detected discrepancies represent genuine bugs.
Nevertheless, false positives can arise from the subtle gap between relational algebra semantics and practical DBMS implementations.
For example, we observed representation-level discrepancies such as mixed-type numeric comparisons in MySQL or signed-zero variations (e.g., ``\texttt{-0}'' vs.~``\texttt{0}'') in PostgreSQL, where values are logically equivalent but binary distinct.
Similarly, false positives may stem from the loss of non-functional metadata, such as MySQL's \texttt{ZEROFILL} display attribute being discarded during intermediate set operations.
Consequently, these cases reflect the boundary where DBMS implementations prioritize engineering trade-offs or standard compliance over strict bitwise equivalence.
Beyond potential false positives, a distinct category of detected logic bugs arises from inconsistent type coercion or memory padding that violates semantic invariance and alters the final result set, as demonstrated by the \texttt{NULL} and \texttt{CHAR} handling defects in Listings 3 and 7.

\para{Analysis of Limited Bug Findings in Certain DBMSs.}
As mentioned above, we did not discover optimizer logic bugs in PostgreSQL, CockroachDB, or SQLite.
For PostgreSQL, only representation-level discrepancies (e.g., ``\texttt{-0}'' vs.~``\texttt{0}'') were observed.
This result is consistent with the widely recognized robustness of PostgreSQL and its compatible implementations.
Richard Hipp, the primary author of SQLite, has attributed PostgreSQL's high quality to its ``\textit{very elaborate peer review process}''~\cite{winslett2019richard}.
We believe that this culture of strict standard compliance also extends to compatible systems such as CockroachDB, thereby reducing the likelihood of optimizer-level logic bugs.
For SQLite, JoinEquiv identified several inconsistencies between \texttt{JOIN} and set operations.
The developers clarified these as intended behaviors arising from SQLite's unique type system: while \texttt{JOIN} predicates employ affinity-based comparisons, set operators enforce strict storage-class equality.
Although classified as intentional design choices rather than logic bugs, these findings underscore JoinEquiv's capacity to reveal subtle semantic gaps and implementation-specific variations.
For example, MySQL broadly applies implicit type coercion across operators.

\para{On Rule Completeness and Design Trade-offs.}
Regarding the theoretical scope of \textit{JoinEquiv}, we prioritize operator-level structural completeness over an exhaustive enumeration of set-theoretic identities.
Since identities equivalent to intersection are infinite, making mathematical completeness unattainable, we map intersection to all fundamental SQL set operators.
Each rule is designed as a minimal canonical form, reducing unnecessary structural complexity and making the semantic correspondence between the original and transformed queries easier to analyze, thereby facilitating root-cause localization.
This design principle leads us to prioritize single-rule transformations over complex rule compositions.
Composing multiple rules often results in deeply nested queries that complicate semantic attribution.

\para{Bug Fix Latency.}
Although only 8 out of the 27 confirmed bugs have been fixed so far, the fix rate alone does not directly reflect their severity.
Among the reported bugs, the low fix rate stems mainly from the complexity of fixing logic bugs in join optimizers, which often requires modifying core rewrite rules and extensive regression testing to avoid performance regressions.
Another important factor is the timing of bug disclosure.
As most bugs were reported recently, subsequent releases were developed without awareness of them, making it unlikely that fixes can be incorporated in the short term.

\section{Related Work}

\para{Logic Bug Identification for DBMS.}
Among existing DBMS testing studies, the SQLancer family \cite{sqlancer} is the most representative for discovering logic bugs and is the closest to our work.
Its key contribution lies in providing a framework for systematically uncovering deep errors inside DBMSs.
SQLancer targets logic bugs that lead to incorrect query results in DBMSs.
Other representative approaches include PQS~\cite{pqs}, NoREC~\cite{norec}, TLP~\cite{tlp}, DQP~\cite{dqp}, and CERT~\cite{ba2024cert}.
PQS and NoREC have been discussed in \S\ref{sec:intro}, while TLP and DQP were analyzed in \S\ref{subsec:exist}.
CERT~\cite{ba2024cert} aims to identify performance issues caused by unexpected cardinality estimations, that is, the cases where the estimated number of result tuples significantly deviates from the actual number.
In contrast, our approach specifically targets the detection of logic bugs in join optimizations, an area that none of the existing SQLancer-based techniques are designed to explore.

\para{Generator-based Testing for DBMS.}
A variety of tools have been developed for generating database data~\cite{binnig2007qagen, bruno2005flexible, gray1994quickly, houkjaer2006simple, khalek2008query} and SQL queries~\cite{bati2007genetic, bruno2006generating, jung2019apollo, mishra2008generating, poess2004generating, seltenreich2025sqlsmith, vartak2010qrelx} to automatically construct test cases for DBMS evaluation.
However, relatively little attention has been paid to the design of test oracles, which are essential for determining the correctness of query results.
Generation-based testing approaches~\cite{gu2012testing, abdul2010automated, lo2010framework, mishra2008generating, yan2018snowtrail, zhong2020squirrel} have been widely adopted for DBMS validation, particularly for bug detection and benchmarking.
Among them, SQLSmith~\cite{seltenreich2025sqlsmith} is one of the most popular generation-based DBMS testers.
It constructs syntactically correct and complex SQL queries from scratch based on the existing database schema, primarily aiming to detect database crashes and internal errors. 
Recently, TQS~\cite{tang2023detecting} introduced data-guided schema generation to derive ground-truth results from wide tables and knowledge-guided query exploration to efficiently traverse the join search space.
It then detects bugs by comparing the ground truth with the results of query plans enforced by optimizer hints.
Unlike TQS, which relies on specific data synthesis and exploration strategies, \textsc{JoinEquiv} proposes a rigorous metamorphic oracle based on set-theoretic algebra that detects bugs through semantic inconsistencies rather than data lineage.

\para{Mutation-based DBMS Fuzzing.}
Mutation-based fuzzers have long been effective in software testing and have discovered numerous bugs~\cite{aschermann2019redqueen, chen2019enfuzz, liang2019deepfuzzer, liang2022pata, wang2021riff, wang2021industry, wu2022unicorn, yun2018qsym, afl2025}.
Early fuzzers such as AFL~\cite{afl2025} can be adapted to lightweight DBMS components, but random mutations often fail to produce syntactically valid SQL inputs.
Subsequent studies introduced taint analysis~\cite{aschermann2019redqueen, liang2022pata} and symbolic execution~\cite{liang2019deepfuzzer, yun2018qsym} to guide mutation, yet generating test cases that satisfy both syntax and semantics remains challenging.
To address these limitations, recent advances have focused on structure-aware and sequence-oriented strategies.
SQUIRREL~\cite{zhong2020squirrel} proposed a syntax-preserving mutation strategy on an intermediate representation and a semantics-guided instantiation mechanism to satisfy data dependencies, ensuring a high validity of generated queries.
Building on this, LEGO~\cite{liang2023sequence} identifies that the order of SQL statements is crucial for triggering deep DBMS states.
It utilizes type-affinity analysis to synthesize meaningful SQL sequences rather than isolated queries.
Other works also contribute to specific scenarios, such as RATEL~\cite{wang2021industry} for enterprise DBMS feedback, UNICORN~\cite{wu2022unicorn} for time-series databases, and GRIFFIN~\cite{fu2022griffin} for grammar-free mutation.
However, mutation-based fuzzers mainly use code coverage feedback (e.g., discovering new execution paths) to trigger crashes or assertion failures.
They are not well-suited for detecting logic bugs, where the DBMS returns incorrect results without crashing. 
In contrast, our work focuses on the semantic correctness of query results.
By systematically applying algebraic equivalence rules, we generate test oracles that can identify logical inconsistencies in join optimizations that coverage-guided fuzzers would likely miss.

\section{Conclusion}

Building on the basic principles of set theory, this paper introduced JoinEquiv, a novel framework for detecting logic bugs in DBMSs through equivalence-based query transformation.
By viewing the join operator as a generalized intersection under set and multiset semantics, we systematically derived three provably equivalent rewriting rules (SJT, ADT, and SDT) and applied them to uncover inconsistencies overlooked by existing state-of-the-art testing approaches such as TLP and DQP.
Extensive experiments on four DBMSs detected 27 previously unknown bugs and 14 verified as \emph{critical} bugs, demonstrating that even simple algebraic equivalences can serve as powerful metamorphic relations for testing complex database systems.
Future work will explore the application of equivalence-based testing beyond join operations.

\section*{Acknowledgements}

We thank anonymous reviewers for their constructive comments and suggestions, which helped improve this paper.
This work was supported in part by Ant Digital Technologies, Ant Group Research Fund.
This work was also supported in part by the Shanghai Sailing Program No.~23YF1410600.

\section*{AI-Generated Content Acknowledgment}

We claim that large language model (LLM) tools (GPT-4o and Grammarly) were used for text editing, e.g., spelling auto-correction, grammar checks, and sentence rephrasing, in the writing of this paper.
We ensure that no AI-generated content was used without human verification or modification.
We guarantee that no parts of the scientific content---including research design, data analysis, result generation, and code implementation---were generated or affected by AI systems.

\bibliographystyle{IEEEtran}
\bibliography{refs}

@article{codd1970relational,
  title        = {A relational model of data for large shared data banks},
  author       = {Codd, Edgar F},
  year         = {1970},
  journal      = {Commun. ACM},
  volume       = {13},
  number       = {6},
  pages        = {377--387}
}

@inproceedings{selinger1979access,
  title        = {Access path selection in a relational database management system},
  author       = {Selinger, P Griffiths and Astrahan, Morton M and Chamberlin, Donald D and Lorie, Raymond A and Price, Thomas G},
  year         = {1979},
  booktitle    = {Proceedings of the 1979 ACM SIGMOD International Conference on Management of Data},
  pages        = {23--34}
}

@inproceedings{graefe1993volcano,
  title        = {The Volcano Optimizer Generator: Extensibility and Efficient Search},
  author       = {Goetz Graefe and William J. McKenna},
  year         = {1993},
  booktitle    = {Proceedings of the Ninth International Conference on Data Engineering},
  pages        = {209--218}
}

@inproceedings{gray1994quickly,
  title        = {Quickly generating billion-record synthetic databases},
  author       = {Gray, Jim and Sundaresan, Prakash and Englert, Susanne and Baclawski, Ken and Weinberger, Peter J},
  year         = {1994},
  booktitle    = {Proceedings of the 1994 ACM SIGMOD International Conference on Management of Data},
  pages        = {243--252}
}

@inproceedings{slutz1998massive,
  title        = {Massive stochastic testing of {SQL}},
  author       = {Slutz, Donald R},
  year         = {1998},
  booktitle    = {Proceedings of 24rd International Conference on Very Large Data Bases},
  pages        = {618--622}
}

@inproceedings{Chaudhuri1998Overview,
  title        = {An Overview of Query Optimization in Relational Systems},
  author       = {Surajit Chaudhuri},
  year         = {1998},
  booktitle    = {Proceedings of the Seventeenth ACM SIGACT-SIGMOD-SIGART Symposium on Principles of Database Systems},
  pages        = {34--43}
}

@inproceedings{poess2004generating,
  title        = {Generating thousand benchmark queries in seconds},
  author       = {Poess, Meikel and Stephens Jr, John M},
  year         = {2004},
  booktitle    = {Proceedings of the 30th International Conference on Very Large Data Bases},
  pages        = {1045--1053}
}

@inproceedings{bruno2005flexible,
  title        = {Flexible database generators},
  author       = {Bruno, Nicolas and Chaudhuri, Surajit},
  year         = {2005},
  booktitle    = {Proceedings of the 31st International Conference on Very Large Data Bases},
  pages        = {1097--1107}
}

@article{bruno2006generating,
  title        = {Generating queries with cardinality constraints for {DBMS} testing},
  author       = {Bruno, Nicolas and Chaudhuri, Surajit and Thomas, Dilys},
  year         = {2006},
  journal      = {IEEE Trans. Knowl. Data Eng.},
  volume       = {18},
  number       = {12},
  pages        = {1721--1725}
}

@article{howden2006theoretical,
  title        = {Theoretical and empirical studies of program testing},
  author       = {Howden, William E},
  year         = {2006},
  journal      = {IEEE Trans. Softw. Eng.},
  number       = {4},
  pages        = {293--298}
}

@inproceedings{houkjaer2006simple,
  title        = {Simple and realistic data generation},
  author       = {Houkj{\ae}r, Kenneth and Torp, Kristian and Wind, Rico},
  year         = {2006},
  booktitle    = {Proceedings of the 32nd International Conference on Very Large Data Bases},
  pages        = {1243--1246}
}

@inproceedings{bati2007genetic,
  title        = {A genetic approach for random testing of database systems},
  author       = {Bati, Hardik and Giakoumakis, Leo and Herbert, Steve and Surna, Aleksandras},
  year         = {2007},
  booktitle    = {Proceedings of the 33rd International Conference on Very Large Data Bases},
  pages        = {1243--1251}
}

@inproceedings{binnig2007qagen,
  title        = {{QAGen}: generating query-aware test databases},
  author       = {Binnig, Carsten and Kossmann, Donald and Lo, Eric and {\"O}zsu, M Tamer},
  year         = {2007},
  booktitle    = {Proceedings of the 2007 ACM SIGMOD International Conference on Management of Data},
  pages        = {341--352}
}

@inproceedings{khalek2008query,
  title        = {Query-aware test generation using a relational constraint solver},
  author       = {Khalek, Shadi Abdul and Elkarablieh, Bassem and Laleye, Yai O and Khurshid, Sarfraz},
  year         = {2008},
  booktitle    = {2008 23rd IEEE/ACM International Conference on Automated Software Engineering},
  pages        = {238--247}
}

@inproceedings{mishra2008generating,
  title        = {Generating targeted queries for database testing},
  author       = {Mishra, Chaitanya and Koudas, Nick and Zuzarte, Calisto},
  year         = {2008},
  booktitle    = {Proceedings of the 2008 ACM SIGMOD International Conference on Management of Data},
  pages        = {499--510}
}

@article{lo2010framework,
  title        = {A framework for testing {DBMS} features},
  author       = {Lo, Eric and Binnig, Carsten and Kossmann, Donald and Tamer {\"O}zsu, M and Hon, Wing-Kai},
  year         = {2010},
  journal      = {VLDB J.},
  volume       = {19},
  number       = {2},
  pages        = {203--230}
}

@inproceedings{abdul2010automated,
  title        = {Automated {SQL} query generation for systematic testing of database engines},
  author       = {Abdul Khalek, Shadi and Khurshid, Sarfraz},
  year         = {2010},
  booktitle    = {Proceedings of the 25th IEEE/ACM International Conference on Automated Software Engineering},
  pages        = {329--332}
}

@inproceedings{vartak2010qrelx,
  title        = {{QRelX}: generating meaningful queries that provide cardinality assurance},
  author       = {Vartak, Manasi and Raghavan, Venkatesh and Rundensteiner, Elke A},
  year         = {2010},
  booktitle    = {Proceedings of the 2010 ACM SIGMOD International Conference on Management of data},
  pages        = {1215--1218}
}

@inproceedings{gu2012testing,
  title        = {Testing the accuracy of query optimizers},
  author       = {Gu, Zhongxian and Soliman, Mohamed A and Waas, Florian M},
  year         = {2012},
  booktitle    = {Proceedings of the Fifth International Workshop on Testing Database Systems},
  pages        = {1--6}
}

@article{stonebraker2013intel,
  title        = {Intel ``big data'' science and technology center vision and execution plan},
  author       = {Stonebraker, Michael and Madden, Sam and Dubey, Pradeep},
  year         = {2013},
  journal      = {SIGMOD Rec.},
  volume       = {42},
  number       = {1},
  pages        = {44--49}
}

@article{SeguraFSC16,
  title        = {A Survey on Metamorphic Testing},
  author       = {Sergio Segura and Gordon Fraser and Ana Bel{\'{e}}n S{\'{a}}nchez and Antonio Ruiz Cort{\'{e}}s},
  year         = {2016},
  journal      = {IEEE Trans. Softw. Eng.},
  volume       = {42},
  number       = {9},
  pages        = {805--824}
}

@article{ding2018plan,
  title        = {Plan stitch: Harnessing the best of many plans},
  author       = {Ding, Bailu and Das, Sudipto and Wu, Wentao and Chaudhuri, Surajit and Narasayya, Vivek},
  year         = {2018},
  journal      = {Proc. VLDB Endow.},
  volume       = {11},
  number       = {10},
  pages        = {1123--1136}
}

@article{ChenKLPTTZ18,
  title        = {Metamorphic Testing: A Review of Challenges and Opportunities},
  author       = {Tsong Yueh Chen and Fei-Ching Kuo and Huai Liu and Pak-Lok Poon and Dave Towey and T. H. Tse and Zhi Quan Zhou},
  year         = {2018},
  journal      = {ACM Comput. Surv.},
  volume       = {51},
  number       = {1},
  pages        = {4:1--4:27}
}

@article{leis2018query,
  title        = {Query optimization through the looking glass, and what we found running the Join Order Benchmark},
  author       = {Viktor Leis and Bernhard Radke and Andrey Gubichev and Atanas Mirchev and Peter A. Boncz and Alfons Kemper and Thomas Neumann},
  year         = {2018},
  journal      = {VLDB J.},
  volume       = {27},
  number       = {5},
  pages        = {643--668}
}

@inproceedings{neumann2018adaptive,
  title        = {Adaptive Optimization of Very Large Join Queries},
  author       = {Thomas Neumann and Bernhard Radke},
  year         = {2018},
  booktitle    = {Proceedings of the 2018 International Conference on Management of Data},
  pages        = {677--692}
}

@inproceedings{yan2018snowtrail,
  title        = {Snowtrail: Testing with production queries on a cloud database},
  author       = {Yan, Jiaqi and Jin, Qiuye and Jain, Shrainik and Viglas, Stratis D and Lee, Allison},
  year         = {2018},
  booktitle    = {Proceedings of the Workshop on Testing Database Systems},
  pages        = {1--6}
}

@inproceedings{yun2018qsym,
  title        = {{QSYM}: A practical concolic execution engine tailored for hybrid fuzzing},
  author       = {Yun, Insu and Lee, Sangho and Xu, Meng and Jang, Yeongjin and Kim, Taesoo},
  year         = {2018},
  booktitle    = {27th USENIX Security Symposium (USENIX Security 18)},
  pages        = {745--761}
}

@article{jung2019apollo,
  title        = {Apollo: Automatic detection and diagnosis of performance regressions in database systems},
  author       = {Jung, Jinho and Hu, Hong and Arulraj, Joy and Kim, Taesoo and Kang, Woonhak},
  year         = {2019},
  journal      = {Proc. VLDB Endow.},
  volume       = {13},
  number       = {1},
  pages        = {57--70}
}

@article{liang2019deepfuzzer,
  title        = {{DeepFuzzer}: Accelerated deep greybox fuzzing},
  author       = {Liang, Jie and Jiang, Yu and Wang, Mingzhe and Jiao, Xun and Chen, Yuanliang and Song, Houbing and Choo, Kim-Kwang Raymond},
  year         = {2019},
  journal      = {IEEE Trans. Dependable Secur. Comput.},
  volume       = {18},
  number       = {6},
  pages        = {2675--2688}
}

@article{ryan2019neo,
  title        = {Neo: A Learned Query Optimizer},
  author       = {Ryan Marcus and Parimarjan Negi and Hongzi Mao and Chi Zhang and Mohammad Alizadeh and Tim Kraska and Olga Papaemmanouil and Nesime Tatbul},
  year         = {2019},
  journal      = {Proc. VLDB Endow.},
  volume       = {12},
  number       = {11},
  pages        = {1705--1718}
}

@inproceedings{aschermann2019redqueen,
  title        = {{REDQUEEN}: Fuzzing with Input-to-State Correspondence},
  author       = {Aschermann, Cornelius and Schumilo, Sergej and Blazytko, Tim and Gawlik, Robert and Holz, Thorsten},
  year         = {2019},
  booktitle    = {26th Annual Network and Distributed System Security Symposium (NDSS)},
  pages        = {1--15}
}

@inproceedings{chen2019enfuzz,
  title        = {{EnFuzz}: Ensemble fuzzing with seed synchronization among diverse fuzzers},
  author       = {Chen, Yuanliang and Jiang, Yu and Ma, Fuchen and Liang, Jie and Wang, Mingzhe and Zhou, Chijin and Jiao, Xun and Su, Zhuo},
  year         = {2019},
  booktitle    = {28th USENIX Security Symposium (USENIX Security 19)},
  pages        = {1967--1983}
}

@article{tlp,
  title        = {Finding bugs in database systems via query partitioning},
  author       = {Manuel Rigger and Zhendong Su},
  year         = {2020},
  journal      = {Proc. ACM Program. Lang.},
  volume       = {4},
  number       = {OOPSLA},
  pages        = {211:1--211:30}
}

@article{chen1998metamorphic,
  title        = {Metamorphic Testing: {A} New Approach for Generating Next Test Cases},
  author       = {Tsong Yueh Chen and Shing-Chi Cheung and Siu-Ming Yiu},
  year         = {2020},
  journal      = {arXiv:2002.12543}
}

@inproceedings{norec,
  title        = {Detecting optimization bugs in database engines via non-optimizing reference engine construction},
  author       = {Manuel Rigger and Zhendong Su},
  year         = {2020},
  booktitle    = {Proceedings of the 28th ACM Joint European Software Engineering Conference and Symposium on the Foundations of Software Engineering},
  pages        = {1140--1152}
}

@inproceedings{pqs,
  title        = {Testing database engines via pivoted query synthesis},
  author       = {Rigger, Manuel and Su, Zhendong},
  year         = {2020},
  booktitle    = {14th USENIX Symposium on Operating Systems Design and Implementation (OSDI 20)},
  pages        = {667--682}
}

@inproceedings{zhong2020squirrel,
  title        = {{SQUIRREL}: Testing database management systems with language validity and coverage feedback},
  author       = {Zhong, Rui and Chen, Yongheng and Hu, Hong and Zhang, Hangfan and Lee, Wenke and Wu, Dinghao},
  year         = {2020},
  booktitle    = {Proceedings of the 2020 ACM SIGSAC Conference on Computer and Communications Security},
  pages        = {955--970}
}

@inproceedings{wang2021riff,
  title        = {{RIFF}: Reduced instruction footprint for {Coverage-Guided} fuzzing},
  author       = {Wang, Mingzhe and Liang, Jie and Zhou, Chijin and Jiang, Yu and Wang, Rui and Sun, Chengnian and Sun, Jiaguang},
  year         = {2021},
  booktitle    = {2021 USENIX Annual Technical Conference (USENIX ATC 21)},
  pages        = {147--159}
}

@inproceedings{wang2021industry,
  title        = {Industry practice of coverage-guided enterprise-level {DBMS} fuzzing},
  author       = {Wang, Mingzhe and Wu, Zhiyong and Xu, Xinyi and Liang, Jie and Zhou, Chijin and Zhang, Huafeng and Jiang, Yu},
  year         = {2021},
  booktitle    = {2021 IEEE/ACM 43rd International Conference on Software Engineering: Software Engineering in Practice (ICSE-SEIP)},
  pages        = {328--337}
}

@inproceedings{fu2022griffin,
  title        = {Griffin : Grammar-Free {DBMS} Fuzzing},
  author       = {Fu, Jingzhou and Liang, Jie and Wu, Zhiyong and Wang, Mingzhe and Jiang, Yu},
  year         = {2022},
  booktitle    = {Proceedings of the 37th IEEE/ACM International Conference on Automated Software Engineering},
  pages        = {49:1--49:12}
}

@inproceedings{liang2022pata,
  title        = {{PATA}: Fuzzing with Path Aware Taint Analysis},
  author       = {Liang, Jie and Wang, Mingzhe and Zhou, Chijin and Wu, Zhiyong and Jiang, Yu and Liu, Jianzhong and Liu, Zhe and Sun, Jiaguang},
  year         = {2022},
  booktitle    = {2022 IEEE Symposium on Security and Privacy (SP)},
  pages        = {1--17}
}

@inproceedings{wu2022unicorn,
  title        = {Unicorn: detect runtime errors in time-series databases with hybrid input synthesis},
  author       = {Wu, Zhiyong and Liang, Jie and Wang, Mingzhe and Zhou, Chijin and Jiang, Yu},
  year         = {2022},
  booktitle    = {Proceedings of the 31st ACM SIGSOFT International Symposium on Software Testing and Analysis},
  pages        = {251--262}
}

@article{zhang2023simple,
  title        = {Simple adaptive query processing vs. learned query optimizers: Observations and analysis},
  author       = {Zhang, Yunjia and Chronis, Yannis and Patel, Jignesh M and Rekatsinas, Theodoros},
  year         = {2023},
  journal      = {Proc. VLDB Endow.},
  volume       = {16},
  number       = {11},
  pages        = {2962--2975}
}

@article{dqp,
  title        = {Keep It Simple: Testing Databases via Differential Query Plans},
  author       = {Jinsheng Ba and Manuel Rigger},
  year         = {2024},
  journal      = {Proc. ACM Manag. Data},
  volume       = {2},
  number       = {3},
  pages        = {188:1--188:26}
}

@article{xin24spatial,
  title        = {Spatial Query Optimization With Learning},
  author       = {Xin Zhang and Ahmed Eldawy},
  year         = {2024},
  journal      = {Proc. VLDB Endow.},
  volume       = {17},
  number       = {12},
  pages        = {4245--4248}
}

@inproceedings{thanos,
  title        = {{THANOS}: {DBMS} Bug Detection via Storage Engine Rotation Based Differential Testing},
  author       = {Fu, Ying and Wu, Zhiyong and Zhang, Yuanliang and Liang, Jie and Fu, Jingzhou and Jiang, Yu and Li, Shanshan and Liao, Xiangke},
  year         = {2024},
  booktitle    = {2025 IEEE/ACM 47th International Conference on Software Engineering (ICSE)},
  pages        = {1--12}
}

@inproceedings{ba2024cert,
  title        = {{CERT}: Finding Performance Issues in Database Systems Through the Lens of Cardinality Estimation},
  author       = {Jinsheng Ba and Manuel Rigger},
  year         = {2024},
  booktitle    = {Proceedings of the 46th IEEE/ACM International Conference on Software Engineering},
  pages        = {133:1--133:13}
}

@misc{percona,
  title        = {Percona Server for {MySQL}},
  author       = {{Percona LLC}},
  year         = {2025},
  note         = {Accessed: 2025-10-27},
  howpublished = {\url{https://www.percona.com/software/mysql-database/percona-server}}
}

@misc{postgresql,
  title        = {{PostgreSQL}: The World's Most Advanced Open Source Relational Database},
  author       = {{PostgreSQL Global Development Group}},
  year         = {2025},
  note         = {Accessed: 2025-10-27},
  howpublished = {\url{https://www.postgresql.org}}
}

@article{song2025detecting,
  title        = {Detecting Schema-Related Logic Bugs in Relational {DBMS}s via Equivalent Database Construction},
  author       = {Jiansen Song and Wensheng Dou and Yingying Zheng and Yu Gao and Ziyu Cui and Wei Wang and Jun Wei},
  year         = {2025},
  journal      = {Proc. VLDB Endow.},
  volume       = {18},
  number       = {7},
  pages        = {2281--2294}
}

@misc{sqlancer,
  title        = {{SQLancer}},
  author       = {Manuel Rigger},
  year         = {2025},
  note         = {Accessed: 2025-10-27},
  howpublished = {\url{https://github.com/sqlancer/sqlancer}}
}

@misc{tidb,
  title        = {{TiDB}},
  author       = {{PingCAP}},
  year         = {2025},
  note         = {Accessed: 2025-10-27},
  howpublished = {\url{https://pingcap.com/tidb}}
}

@misc{seltenreich2025sqlsmith,
  title        = {{SQLSmith}},
  author       = {Seltenreich, Andreas},
  year         = {2025},
  note         = {Accessed: 2025-10-27},
  howpublished = {\url{https://github.com/anse1/sqlsmith}}
}

@misc{wikipedia_test_oracle_2025,
  title        = {Test oracle},
  author       = {Wikipedia},
  year         = {2025},
  note         = {Accessed: 2025-10-27},
  howpublished = {\url{https://en.wikipedia.org/wiki/Test_oracle}}
}

@misc{afl2025,
  title        = {American Fuzzy Lop},
  author       = {Zalewski, Micha\l{}},
  year         = {2025},
  note         = {Accessed: 2025-10-27},
  howpublished = {\url{https://github.com/google/AFL}}
}

@article{chamberlin1976relational,
  title={Relational data-base management systems},
  author={Chamberlin, Donald D},
  journal={ACM Comput. Surv.},
  volume={8},
  number={1},
  pages={43--66},
  year={1976}
}

@article{stonebraker2024goes,
  title={What Goes Around Comes Around... And Around...},
  author={Stonebraker, Michael and Pavlo, Andrew},
  journal={SIGMOD Rec.},
  volume={53},
  number={2},
  pages={21--37},
  year={2024}
}

@article{hai2025quantum,
  title={Quantum Data Management in the {NISQ} Era},
  author={Hai, Rihan and Hung, Shih-Han and Coopmans, Tim and Littau, Tim and Geerts, Floris},
  journal={Proc. VLDB Endow.},
  volume={18},
  number={6},
  pages={1720--1729},
  year={2025}
}

@article{zhao2025debunking,
  author       = {Junyi Zhao and
                  Kai Su and
                  Yifei Yang and
                  Xiangyao Yu and
                  Paraschos Koutris and
                  Huanchen Zhang},
  title        = {Debunking the Myth of Join Ordering: Toward Robust {SQL} Analytics},
  journal      = {Proc. ACM Manag. Data},
  volume       = {3},
  number       = {3},
  pages        = {146:1--146:28},
  year         = {2025}
}

@article{birler2024robust,
  title={Robust join processing with diamond hardened joins},
  author={Birler, Altan and Kemper, Alfons and Neumann, Thomas},
  journal={Proc. VLDB Endow.},
  volume={17},
  number={11},
  pages={3215--3228},
  year={2024}
}

@article{hu2025output,
  author       = {Xiao Hu},
  title        = {Output-Optimal Algorithms for Join-Aggregate Queries},
  journal      = {Proc. ACM Manag. Data},
  volume       = {3},
  number       = {2},
  pages        = {104:1--104:27},
  year         = {2025}
}

@inproceedings{kalumin2025optimizing,
  title={Optimizing queries with many-to-many joins},
  author={Kalumin, Hasara and Deshpande, Amol},
  booktitle={2025 IEEE 41st International Conference on Data Engineering (ICDE)},
  pages={3668--3681},
  year={2025}
}

@article{tang2023detecting,
  title={Detecting Logic Bugs of Join Optimizations in {DBMS}},
  author={Tang, Xiu and Wu, Sai and Zhang, Dongxiang and Li, Feifei and Chen, Gang},
  journal={Proc. ACM Manag. Data},
  volume={1},
  number={1},
  pages={55:1--55:26},
  year={2023}
}

@inproceedings{liang2023sequence,
  title={Sequence-oriented {DBMS} fuzzing},
  author={Liang, Jie and Chen, Yaoguang and Wu, Zhiyong and Fu, Jingzhou and Wang, Mingzhe and Jiang, Yu and Huang, Xiangdong and Chen, Ting and Wang, Jiashui and Li, Jiajia},
  booktitle={2023 IEEE 39th International Conference on Data Engineering (ICDE)},
  pages={668--681},
  year={2023}
}

@article{winslett2019richard,
  title={{Richard Hipp} speaks out on {SQLite}},
  author={Winslett, Marianne and Braganholo, Vanessa},
  journal={SIGMOD Rec.},
  volume={48},
  number={2},
  pages={39--46},
  year={2019}
}

\end{document}